\pdfoutput=1
\documentclass[11pt,a4paper]{article}
\usepackage[T1]{fontenc}
\usepackage[utf8]{inputenc}
\usepackage[UKenglish]{babel}
\usepackage{lmodern}
\usepackage[margin=25mm]{geometry}
\usepackage{amsmath,amssymb,amsthm,booktabs,placeins,graphicx}
\usepackage{microtype}
\usepackage[round,authoryear]{natbib}
\usepackage{authblk}
\usepackage{xurl}
\usepackage[unicode,hidelinks]{hyperref}
\providecommand{\doi}[1]{doi:\space\href{https://doi.org/#1}{\nolinkurl{#1}}}
\newtheorem{theorem}{Theorem}[section]
\newtheorem{proposition}[theorem]{Proposition}
\newtheorem{lemma}[theorem]{Lemma}
\newtheorem{corollary}[theorem]{Corollary}
\theoremstyle{definition}
\newtheorem{definition}[theorem]{Definition}
\newcommand{\E}{\mathbb E}
\newcommand{\TV}{\operatorname{TV}}
\newcommand{\OPT}{\operatorname{OPT}_{\mathrm{on}}}
\newcommand{\err}{\mathcal E}
\newcommand{\ind}{\mathbf 1}
\newcommand{\reg}{\operatorname{Reg}}
\newcommand{\hist}{\mathcal H}

\hypersetup{
 pdftitle={Robust and Learned Online Matching in Growing Trees},
 pdfauthor={Marek Gałązka and Hanna Wdowicka},
 pdfsubject={Online matching under misspecified and learned graph-growth laws},
 pdfkeywords={online matching, growing trees, model misspecification, distributional advice, regret, preferential attachment}
}
\title{\bfseries Robust and Learned Online Matching\\in Growing Trees}
\author[1]{Marek Ga\l{}\k{a}zka}
\author[2]{Hanna Wdowicka}
\affil[1]{Faculty of Mathematics and Computer Science, Adam Mickiewicz University, Pozna\'n, Poland}
\affil[2]{Department of Statistics, Pozna\'n University of Economics and Business, Pozna\'n, Poland}

\date{September 2026}
\begin{document}
\maketitle
\begin{abstract}
We study irrevocable maximum-cardinality matching in trees revealed by successive leaf attachments, with a known horizon and an exogenous growth law that is misspecified or unknown. For deterministic affine attachment forecasts with nonnegative degree reinforcement, the optimal threshold policy loses at most twice the cumulative expected conditional total-variation error relative to an online oracle knowing the actual growth law. This follows from a unit-span property of the Bellman continuation score and has no additional horizon factor. A four-vertex example attains the coefficient two for the specified deterministic policy, and a two-model argument gives a lower bound linear in the model-error budget for arbitrary policies under general misspecification. For uniform--preferential attachment, the local error has an exact expression through the leaf count. When its constant mixture parameter is unknown, we estimate it from the same growing tree and update the threshold policy at geometric times. A parameter-sensitivity bound for individual Bellman prices and uniform degree-moment estimates yield expected regret $O(\sqrt n\,\log^2 n)$, using $O(n^2\log n)$ arithmetic operations and $O(n)$ stored entries. The exact minimax rate remains open.
\end{abstract}
\noindent\textbf{Keywords:} online matching, growing trees, model misspecification, distributional advice, regret, preferential attachment

\section{Introduction}

An online matching algorithm on a growing tree must decide whether to accept each new edge before observing subsequent attachments. Accepting an edge occupies both endpoints permanently. Rejecting it preserves their availability but forfeits that edge. The growth mechanism therefore affects the value of waiting: vertices that attract many future children can be worth preserving.

A distribution-specific optimal policy is useful only if its behaviour is stable when the assumed model is inaccurate. In a growing graph this question differs from changing a distribution of independent requests. An attachment changes the state that determines later attachment probabilities, so a local modelling error can also change subsequent graph evolution. A direct finite-horizon perturbation argument can charge an error by the full remaining reward. We ask whether the matching problem admits a sharper structural estimate.

We answer this question for affine attachment forecasts. The true law need not be affine, stationary, or correctly specified. Let $P_t(\cdot\mid\hist_t)$ be its next-parent distribution, where $\hist_t$ is the observed graph history, and let $Q_t(\cdot\mid T_t)$ be the forecast evaluated on that same history. Write
\begin{equation}\label{eq:intro-error}
 \err_n(P,Q)=\sum_{t=2}^{n-1}\E_P\!\left[
 \TV\bigl(P_t(\cdot\mid\hist_t),Q_t(\cdot\mid T_t)\bigr)\right].
\end{equation}
The graph-generation law is exogenous: accepting or rejecting edges does not change it. This makes the error in \eqref{eq:intro-error} independent of the matching policy.

\subsection{Main results}
Our main results are the following.
\begin{enumerate}
\item \emph{Stability under misspecification.} If $\pi_Q$ is the optimal threshold policy for an affine forecast, then
\[
 \OPT(P)-\E_P|M_n^{\pi_Q}|\leq 2\err_n(P,Q).
\]
The proof identifies a conditional continuation score with span at most one and telescopes its Bellman potential. In particular, local error at most $\varepsilon$ costs at most $2(n-2)\varepsilon$, rather than an additional factor proportional to the horizon.
\item \emph{Sharpness and parameter calibration.} A four-vertex example attains the coefficient two for $\pi_Q$ with our tie convention. For arbitrary policies, a two-model construction gives a lower bound $\err_4$. In the uniform--preferential family we express the local error exactly through the number of leaves and obtain a bound of order $n|\theta-\widehat\theta|$. The lower constructions concern general misspecification; they do not establish the optimal parameter dependence inside this family.
\item \emph{Learning the model from one growing tree.} For uniform--preferential attachment with an unknown constant parameter, an inverse-mean leaf-count estimator has root-mean-square error $O(k^{-1/2})$ after $k$ vertices. Updating this estimate at geometric times gives regret $O(\sqrt n\,\log^2 n)$, with $O(n^2\log n)$ arithmetic work and $O(n)$ stored entries. The proof combines sensitivity of the Bellman prices with degree-moment estimates and a fixed true-model potential. We also give the simpler pilot-and-commit bound $O(n^{2/3})$. The exact optimal rate remains open.
\end{enumerate}

\subsection{Structure of the argument}
Section~\ref{sec:model} gives a self-contained derivation of the Bellman representation and threshold policy for known affine ordinary-degree growth kernels. This provides the starting point for the analysis of misspecification and learning an unknown growth parameter. We also explain how the stability argument applies under the rooted plane-oriented convention.

\subsection{Related work}
\citet{acan2019} analyse greedy matching in uniform and preferential attachment graph processes. Their asymptotic greedy benchmarks concern a fixed policy, rather than the cost of acting under an incorrect growth model. Root and self-loop conventions must be distinguished when comparing finite-size values.

A closely related prediction-based result is due to \citet{aamand2022}. They prove stochastic optimality of a predicted-degree priority rule in a bipartite Chung--Lu--Vu model. Their Appendix~D already bounds the loss caused by erroneous priority orders. Our advice instead specifies conditional parent distributions in a graph whose vertex set grows; our error is evaluated along the resulting dependent history. \citet{canonne2025} use distributional advice for minimum-cost metric matching, with Wasserstein error. Their objective, fixed server set, and offline benchmark differ from ours. \citet{choo2024} and \citet{burathep2025} study imperfect advice for bipartite matching under random arrival orders. Their competitive guarantees do not directly address the present stochastic-control benchmark.

Robustness to changing arrival distributions also appears in budgeted online allocation. \citet{zhou2019} use distributionally robust optimisation and periodically update dual prices for drifting user arrivals. Their fixed bidder set, budget constraints, and offline allocation benchmark differ from the irrevocable growing-tree problem considered here.

For adversarial edge arrivals, \citet{buchbinder2017} obtain a $5/9$-competitive algorithm on forests. \citet{jiang2024} study growing trees with free disposal, which permits deletion of accepted edges. We retain irrevocability throughout. Model-error bounds based on value functions are standard; \citet{lobel2024} sharpen general simulation-lemma estimates. Our contribution in this direction is the unit-span property specific to these matching continuation values, rather than a new general perturbation method.

Estimating preferential attachment is also established. \citet{gao2022} study parametric inference, including estimators from a final snapshot. \citet{zhang2022} use counts of degree-one vertices for estimation in a time-varying attachment model. We use the same broad statistical idea and supply an elementary finite-sample bound for our mixture, sufficient to obtain an online matching guarantee.

Learning with exogenous dynamics is studied more generally by \citet{maran2026} in finite-state episodic Markov decision processes. Their state-space framework and our single-trajectory growing-tree model have different complexity parameters. The geometric-update analysis here uses the sensitivity of individual matching prices and moments of the evolving degree sequence.

\section{Model and Bellman structure}\label{sec:model}

Fix an integer horizon $n\geq2$. The process starts from vertices $1,2$ and the edge $\{1,2\}$. At time $2$ this edge is offered once. For $t=2,\ldots,n-1$, vertex $t+1$ arrives with a single edge to a parent $v\in[t]$. Let $T_t$ be the labelled tree before this arrival and $d_t(v)$ its ordinary degree. Thus $\sum_{v\leq t}d_t(v)=2(t-1)$. The algorithm observes the new edge and immediately accepts or rejects it. The observed graph history includes every revealed edge, whether accepted or rejected. Acceptance is feasible precisely when the parent is unmatched. We call the policy that accepts every feasible edge Greedy. Decisions are irrevocable and do not affect any parent distribution.

A policy is nonanticipating and may use internal randomness independent of the graph-generation randomness. The horizon $n$ is known. Under a true law $P$, define
\[
 \OPT(P)=\sup_{\pi\text{ nonanticipating}}\E_P|M_n^\pi|.
\]
The policy in this supremum knows $P$ but not its future realisation. All regrets in the paper use this benchmark. They are not competitive ratios against an offline optimum.

\begin{definition}[Affine forecast]\label{def:affine}
A forecast $Q$ is specified before the process begins by deterministic coefficients $a_t,\beta_t$, $2\leq t<n$, such that
\[
 \begin{gathered}
 q_t(d)=a_t+\beta_t d,\qquad \beta_t\geq0,\qquad ta_t+2(t-1)\beta_t=1,\\
 0\leq q_t(d)\leq1\qquad(1\leq d<t).
 \end{gathered}
\]
It assigns $Q_t(v\mid T_t)=q_t(d_t(v))$.
\end{definition}

Examples include uniform attachment, attachment proportional to $d+\delta_t$ with deterministic $\delta_t>-1$, and
\begin{equation}\label{eq:mixture}
 q_t^{\widehat\theta_t}(d)=\frac{1-\widehat\theta_t}{t}
       +\frac{\widehat\theta_t d}{2(t-1)},\qquad \widehat\theta_t\in[0,1].
\end{equation}
The schedule in \eqref{eq:mixture} may vary with time but is fixed in advance. The true law $P$ may be any distribution on these graph-growth histories, with arbitrary history-dependent conditional parent probabilities. There is no assumption that $P$ belongs to the forecast family.

Let $U$ be the unmatched vertices after the decision at time $t$, and let $V_t^Q(T,U)$ denote the maximum expected number of additional accepted edges under $Q$. The following proposition establishes the Bellman representation for Definition~\ref{def:affine}; its proof is included in full.

\begin{proposition}[Separable value and threshold policy]\label{prop:bellman}
For every legal state,
\begin{equation}\label{eq:value}
 V_t^Q(T,U)=c_t+\sum_{v\in U}b_t(d_t(v)),
\end{equation}
where $c_n=0$, $b_n(d)=0$, and, backwards for $2\leq t<n$,
\begin{align}
 c_t&=c_{t+1}+b_{t+1}(1),\label{eq:c}\\
 b_t(d)&=(1-q_t(d))b_{t+1}(d)
  +q_t(d)\max\{b_{t+1}(d+1),1-b_{t+1}(1)\}.
 \label{eq:b}
\end{align}
Moreover, $0\leq b_t(d)\leq1$ and $b_t(d)$ is nondecreasing in $d$. An optimal policy accepts a feasible arrival at time $t+1$ exactly when
\begin{equation}\label{eq:accept}
 b_{t+1}(d+1)+b_{t+1}(1)\leq1,
\end{equation}
where $d$ is the parent's degree before insertion. It accepts the seed edge if $2b_2(1)\leq1$. We use these weak inequalities as the tie convention defining $\pi_Q$.
\end{proposition}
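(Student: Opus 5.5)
The plan is backward induction on $t$ from $n$ down to $2$, proving simultaneously that (i) \eqref{eq:value} holds for every legal state, (ii) $0\le b_t(d)\le1$, and (iii) $b_t$ is nondecreasing in $d$. At $t=n$ no further arrivals remain, so $V_n^Q\equiv0$, which matches $c_n=0$ and $b_n\equiv0$.

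For the inductive step, fix a state $(T,U)$ at time $t<n$ and condition on the parent $v$, which has probability $q_t(d_t(v))$. The new vertex $t+1$ arrives unmatched with degree $1$, and the parent's degree becomes $d_t(v)+1$; all other degrees are unchanged. Consider first the case $v\in U$.
\begin{itemize}
\item If we reject, the induction hypothesis gives continuation value $c_{t+1}+\sum_{u\in U\setminus\{v\}}b_{t+1}(d_t(u))+b_{t+1}(d_t(v)+1)+b_{t+1}(1)$.
\item If we accept, $v$ and $t+1$ are removed from the unmatched set and we gain $1$, so the value is $1+c_{t+1}+\sum_{u\in U\setminus\{v\}}b_{t+1}(d_t(u))$.
\end{itemize}
Taking the larger of the two and subtracting the common part, the $v$-dependent contribution is $b_{t+1}(1)+\max\{b_{t+1}(d_t(v)+1),\,1-b_{t+1}(1)\}$. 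Comparing the two options also yields the acceptance rule \eqref{eq:accept}, with ties broken towards acceptance. If instead $v\notin U$, we must reject, and the contribution is $b_{t+1}(1)$. In both cases every $u\in U$ other than the parent keeps degree $d_t(u)$ and contributes $b_{t+1}(d_t(u))$.

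Now average over $v$. The term $c_{t+1}+b_{t+1}(1)$ appears in every branch, which gives \eqref{eq:c}. For a fixed $u\in U$, its contribution is $b_{t+1}(d+1)$ or $\max\{\cdot\}$ with probability $q_t(d)$, where $d=d_t(u)$, and $b_{t+1}(d)$ with probability $1-q_t(d)$. This gives \eqref{eq:b}. The step uses only that $q_t$ depends on $v$ through $d_t(v)$ and that the $q_t(d_t(v))$ sum to $1$, which follows from $ta_t+2(t-1)\beta_t=1$. Nonnegativity of the $q$'s and $q\le1$ are needed so that \eqref{eq:b} is a convex combination.

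Next we carry the bounds through the recursion. Since $b_{t+1}\in[0,1]$, both arguments of the max lie in $[0,1]$, so $b_t(d)\in[0,1]$ as a convex combination.

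The main obstacle is monotonicity. Write $m(d)=\max\{b_{t+1}(d+1),1-b_{t+1}(1)\}$, so that $b_t(d)=b_{t+1}(d)+q_t(d)\bigl(m(d)-b_{t+1}(d)\bigr)$.
\begin{itemize}
\item The bracket is nonnegative, because $m(d)\ge b_{t+1}(d+1)\ge b_{t+1}(d)$ by the induction hypothesis.
\item $q_t$ is nondecreasing because $\beta_t\ge0$.
\end{itemize}
I would compare $d$ with $d+1$ directly:
\[
 b_t(d+1)-b_t(d)=(1-q_t(d+1))\bigl(b_{t+1}(d+1)-b_{t+1}(d)\bigr)+q_t(d+1)\bigl(m(d+1)-m(d)\bigr)+\bigl(q_t(d+1)-q_t(d)\bigr)\bigl(m(d)-b_{t+1}(d)\bigr).
\]
Each term is nonnegative: the first by monotonicity of $b_{t+1}$ and $q\le1$, the second because $m$ is nondecreasing, and the third by the two facts above. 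Note that $q_t(d+1)\le1$ requires $d+1<t$; the boundary case $d=t-1$ needs a separate check. Since a degree of $t$ or more cannot occur at time $t$, it suffices to state monotonicity on $1\le d<t$ and observe that only these values are ever used.

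Finally, the seed edge at time $2$ is offered with both vertices of degree $1$ and no gain yet. Accepting gives $1+c_2$, and rejecting gives $c_2+2b_2(1)$, which yields the stated rule $2b_2(1)\le1$.

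Optimality of the greedy-in-value choice at each step is the standard finite-horizon dynamic programming argument. It is valid because decisions do not affect the law of the graph and the state $(T,U)$ is sufficient under $Q$. Internal randomisation cannot help, since the value is a maximum over actions.
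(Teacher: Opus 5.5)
Your proposal is correct and follows essentially the same route as the paper: backward induction from the zero terminal values, the reject/accept comparison conditional on the parent, averaging over the parent to obtain \eqref{eq:c}--\eqref{eq:b}, the convex-combination argument for the range $[0,1]$, and monotonicity from $q_t$ being nondecreasing together with $\max\{b_{t+1}(d+1),1-b_{t+1}(1)\}\geq b_{t+1}(d)$. Your three-term decomposition of $b_t(d+1)-b_t(d)$ and your restriction of monotonicity to the legal degrees $1\leq d<t$ (where $0\leq q_t\leq1$ is guaranteed) only make explicit details that the paper's shorter proof leaves implicit.
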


\begin{proof}
Assume \eqref{eq:value} at time $t+1$ and put
\[
 C=c_{t+1}+b_{t+1}(1)+\sum_{u\in U}b_{t+1}(d_t(u)).
\]
If the parent is matched, the continuation value is $C$. If it is a free vertex of degree $d$, rejection gives $C+b_{t+1}(d+1)-b_{t+1}(d)$, whereas acceptance gives $C+1-b_{t+1}(1)-b_{t+1}(d)$. Maximisation and averaging over the parent give \eqref{eq:c}--\eqref{eq:b}; their comparison gives \eqref{eq:accept}. The terminal condition starts the induction.

The range $[0,1]$ is preserved because \eqref{eq:b} is a convex combination of numbers in this range. For monotonicity, set $A(d)=b_{t+1}(d)$ and $B(d)=\max\{b_{t+1}(d+1),1-b_{t+1}(1)\}$. Both are nondecreasing and $B(d)\geq A(d)$. Since $q_t(d)$ is nondecreasing, $(1-q_t(d))A(d)+q_t(d)B(d)$ is nondecreasing as well. Finally, accepting the seed gives $1+c_2$, and rejecting it gives $c_2+2b_2(1)$.
\end{proof}

Consequently,
\begin{equation}\label{eq:initial}
 V_{\mathrm{init}}^Q=c_2+\max\{1,2b_2(1)\}.
\end{equation}
The acceptance set in \eqref{eq:accept} is a prefix of the degrees. Its threshold can be stored for every $t$. Computing the coefficients requires $O(n^2)$ arithmetic operations; two coefficient rows and the stored thresholds require $O(n)$ entries. Maintaining degrees and matched flags then takes $O(1)$ work per arrival. These are arithmetic-operation bounds, not bounds on exact rational bit complexity. The formula holds for every legal state, including a state reached under a different growth law.

\section{Stability under a misspecified growth law}\label{sec:stability}

We use $\TV(p,q)=\frac12\sum_v|p(v)-q(v)|$ and the error in \eqref{eq:intro-error}. In particular, $Q$ is evaluated on the actual history drawn under $P$.

\begin{lemma}[Unit-span continuation]\label{lem:span}
Fix a state $(T_t,U)$ and let $G_t(v)$ be the maximum of immediate reward plus $V_{t+1}^Q$ after parent $v$ is revealed. Then
\begin{equation}\label{eq:G}
 \begin{aligned}
 G_t(v)&=C+\ind_{\{v\in U\}}h_t(d_t(v)),\\
 h_t(d)&=\max\{b_{t+1}(d+1),1-b_{t+1}(1)\}-b_{t+1}(d).
 \end{aligned}
\end{equation}
with $C$ independent of $v$ and $0\leq h_t(d)\leq1$. In particular,
$\max_vG_t(v)-\min_vG_t(v)\leq1$.
\end{lemma}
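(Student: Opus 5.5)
The plan is to reuse the case computation in the proof of Proposition~\ref{prop:bellman}, now read as a function of the revealed parent $v$ rather than averaged over it. Fix the state $(T_t,U)$. By \eqref{eq:value} at time $t+1$, which holds for every legal state, the post-decision value is $c_{t+1}$ plus the sum of $b_{t+1}$ over unmatched vertices with their updated degrees. The new vertex $t+1$ has degree $1$. Only the parent's degree changes. We set
\[
 C=c_{t+1}+b_{t+1}(1)+\sum_{u\in U}b_{t+1}(d_t(u)).
\]
This depends on the state but not on $v$.

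We then split into cases according to the parent $v$.
\begin{itemize}
\item If $v\notin U$, acceptance is infeasible. Rejection leaves every unmatched vertex's degree unchanged except the matched parent's, which does not enter the sum. The new vertex is free, so $G_t(v)=C$.
\item If $v\in U$ has degree $d=d_t(v)$ and the edge is rejected, the term $b_{t+1}(d)$ in $C$ is replaced by $b_{t+1}(d+1)$.
\item If the edge is accepted, the reward is $1$. Both $v$ and $t+1$ leave $U$, so the terms $b_{t+1}(d)$ and $b_{t+1}(1)$ are removed. This gives $C+1-b_{t+1}(1)-b_{t+1}(d)$.
\end{itemize}
Taking the maximum of the two options for $v\in U$ yields exactly $C+h_t(d)$ with $h_t$ as in \eqref{eq:G}. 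This proves the displayed representation.

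For the bounds on $h_t$, we invoke Proposition~\ref{prop:bellman}, which gives that $b_{t+1}$ is nondecreasing and takes values in $[0,1]$. The lower bound follows since $h_t(d)\ge b_{t+1}(d+1)-b_{t+1}(d)\ge0$ by monotonicity. For the upper bound, both candidates inside the maximum lie in $[0,1]$, because $b_{t+1}(d+1)\le1$ and $1-b_{t+1}(1)\le1$, while $b_{t+1}(d)\ge0$. Hence $h_t(d)\le1$.

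The span statement then follows immediately: every $G_t(v)$ lies in $[C,C+1]$, so $\max_v G_t(v)-\min_v G_t(v)\le1$.

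The only point requiring care is the degree argument $d+1$. When $d=t-1$, it can exceed the range $1\le d<t$ on which Definition~\ref{def:affine} constrains $q_t$. One must check that $b_{t+1}$ is defined and lies in $[0,1]$ at every degree $d\le t$ that can occur in $T_{t+1}$. That range holds there. I expect this bookkeeping, together with the remark that formula \eqref{eq:value} applies to arbitrary legal states, to be the only subtle step. The rest is direct.
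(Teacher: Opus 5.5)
Your proof is correct and follows the paper's argument: the paper also reads off $G_t(v)$ from the accept/reject comparison in the proof of Proposition~\ref{prop:bellman}, then gets $h_t\ge 0$ from monotonicity of $b_{t+1}$ and $h_t\le 1$ from the $[0,1]$ range, exactly as you do. Your explicit case split and your remark that $b_{t+1}$ is controlled at every degree $d\le t$ occurring in $T_{t+1}$ (because Definition~\ref{def:affine} constrains $q_{t+1}(d)$ for all $d<t+1$) only make explicit what the paper leaves implicit.
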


\begin{proof}
The expression is the action comparison in Proposition~\ref{prop:bellman}. Monotonicity makes the first difference nonnegative, and the maximum and the subtracted coefficient both belong to $[0,1]$.
\end{proof}

\begin{theorem}[Model-error bound]\label{thm:robust}
For every affine forecast $Q$ and every exogenous true law $P$,
\begin{equation}\label{eq:robust}
 0\leq\reg_n(P,Q):=\OPT(P)-\E_P|M_n^{\pi_Q}|
 \leq2\err_n(P,Q).
\end{equation}
More precisely,
\begin{equation}\label{eq:two-sided}
 \OPT(P)\leq V_{\mathrm{init}}^Q+\err_n(P,Q),\qquad
 \E_P|M_n^{\pi_Q}|\geq V_{\mathrm{init}}^Q-\err_n(P,Q).
\end{equation}
\end{theorem}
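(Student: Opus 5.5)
The plan is to derive both inequalities in \eqref{eq:two-sided} from one telescoping argument on the Bellman potential $V^Q_t(T_t,U_t)$ from Proposition~\ref{prop:bellman}. This potential is defined for every legal state, including states reached under $P$. Adding the two inequalities gives \eqref{eq:robust}. The lower bound $0\le\reg_n$ holds because $\pi_Q$ is nonanticipating and is therefore admissible in the supremum defining $\OPT(P)$.

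\textbf{Key step.} Fix $t\in\{2,\dots,n-1\}$, a history $\hist_t$, and the unmatched set $U$ after the decision at time $t$. Let $G_t(v)$ be as in Lemma~\ref{lem:span}. By the Bellman recursion, under the forecast we have $V^Q_t(T_t,U)=\sum_v Q_t(v\mid T_t)\,G_t(v)$. Now let $\pi$ be any nonanticipating policy, possibly randomised, running under $P$. Write $r_{t+1}$ for its reward at step $t+1$ and $U'$ for the resulting unmatched set. Since $G_t(v)$ maximises immediate reward plus $V^Q_{t+1}$ over the two actions, and randomisation only averages the two actions, we get
\[
 \E_P\bigl[r_{t+1}+V^Q_{t+1}(T_{t+1},U')\mid\hist_t,U\bigr]\le\sum_v P_t(v\mid\hist_t)G_t(v).
\]
For two probability vectors $p,q$ and a function $g$ with span at most one, $|\sum_v(p(v)-q(v))g(v)|\le\TV(p,q)\,(\max g-\min g)$. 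Applying this with Lemma~\ref{lem:span}, the right-hand side above is at most $V^Q_t(T_t,U)+\TV(P_t,Q_t)$.

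\textbf{Upper bound on $\OPT(P)$.} Summing the step inequality from $t=2$ to $n-1$ and taking $\E_P$ makes the potentials telescope, with terminal value $V^Q_n=0$. The seed step is handled separately. It is deterministic and involves no model, and the value after it is at most $\max\{1+c_2,\,c_2+2b_2(1)\}=V^Q_{\mathrm{init}}$ by \eqref{eq:initial}. This yields $\E_P|M^\pi_n|\le V^Q_{\mathrm{init}}+\err_n(P,Q)$ for every $\pi$. Taking the supremum over $\pi$ gives the first half of \eqref{eq:two-sided}.

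\textbf{Lower bound for $\pi_Q$.} Here $\pi_Q$ attains the maximum in $G_t(v)$ for every parent $v$, and it also attains the seed choice. The step inequality therefore becomes an equality, $\E_P[r_{t+1}+V^Q_{t+1}\mid\cdot]=\sum_vP_t(v)G_t(v)$. This equality holds because the formula of Proposition~\ref{prop:bellman} is valid in every legal state, not only in those $Q$ would produce. The same total-variation bound now gives $\E_P[r_{t+1}+V^Q_{t+1}\mid\cdot]\ge V^Q_t-\TV(P_t,Q_t)$. Telescoping again gives $\E_P|M_n^{\pi_Q}|\ge V^Q_{\mathrm{init}}-\err_n(P,Q)$.

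The main point needing care is the conditioning. $P_t$ depends on the full history $\hist_t$, whereas $V^Q$ and $G_t$ depend only on $(T_t,U)$. One must check that the tower property applies under $P$, with $U$ measurable with respect to the history and the policy's independent randomness, and that this randomness does not change the parent distribution. The latter holds because the growth law is exogenous. The rest is the span-one total-variation inequality.
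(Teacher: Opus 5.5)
Your proposal is correct and matches the paper's proof essentially step for step. Like the paper, you telescope the fixed forecast potential $V_t^Q$ (valid in every legal state) under $P$, bound each one-step drift by $e_t(\hist_t)$ using the unit-span property of Lemma~\ref{lem:span}, handle the seed via $R_2+V_2^Q\le V_{\mathrm{init}}^Q$ with equality for $\pi_Q$, and rely on exogeneity so that the summed error is $\err_n(P,Q)$ for every policy.
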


\begin{proof}
For a function of span at most one, the difference between its expectations under $p$ and $q$ is at most $\TV(p,q)$ in absolute value. Apply this to Lemma~\ref{lem:span}, conditionally on the current graph and matching state. Under $Q$, the Bellman equation says $\E_{Q_t}G_t=V_t^Q$.

Let $R_t$ count edges accepted up to time $t$. Any policy chooses an action with reward plus next potential at most $G_t(v)$. Thus, under $P$,
\begin{equation}\label{eq:increment}
 \E_P[R_{t+1}+V_{t+1}^Q-R_t-V_t^Q\mid\hist_t,U_t]
 \leq e_t(\hist_t),
\end{equation}
where $e_t(\hist_t)=\TV(P_t(\cdot\mid\hist_t),Q_t(\cdot\mid T_t))$. Internal policy randomness gives no additional information about future graph growth. For $\pi_Q$ the selected action attains $G_t(v)$, so the same conditional expectation is also at least $-e_t(\hist_t)$.

For every initial seed decision, $R_2+V_2^Q\leq V_{\mathrm{init}}^Q$, with equality for $\pi_Q$. Sum \eqref{eq:increment}, use $V_n^Q=0$, and take expectations. The distribution of graph histories is the same under every policy, so the error sum is always \eqref{eq:intro-error}. This proves \eqref{eq:two-sided}; subtracting gives \eqref{eq:robust}. The lower bound zero follows from the definition of $\OPT$.
\end{proof}

\begin{corollary}[Conditional suffix bound]\label{cor:suffix}
Fix a history at time $k$ and an unmatched set $U$. Starting from this state, the loss of $\pi_Q$ relative to the optimal continuation under $P$ is at most
\[
 2\E_P\!\left[\sum_{t=k}^{n-1}e_t(\hist_t)\,\middle|\,\hist_k\right].
\]
The forecast may be selected using $\hist_k$, provided its future schedule is then fixed.
\end{corollary}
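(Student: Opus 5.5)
The plan is to rerun the potential argument of Theorem~\ref{thm:robust}, started at time $k$ rather than time $2$ and conditioned on $\hist_k$ throughout. Fix the history $\hist_k$ and the unmatched set $U$. Let $Q$ be any affine forecast whose coefficients $a_t,\beta_t$ for $t\geq k$ are determined by $\hist_k$ and are then held fixed. Conditionally on $\hist_k$ these coefficients are deterministic, so Proposition~\ref{prop:bellman} applies to the suffix problem. We obtain the separable values $V_t^Q(T,U)=c_t+\sum_{v\in U}b_t(d_t(v))$ for $t\geq k$ by the same backward recursion. The recursion only uses the coefficients from time $t$ onward, so an earlier portion of the schedule is irrelevant. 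Lemma~\ref{lem:span} then holds verbatim for every $t\geq k$: the continuation score $G_t(v)$ has span at most one.

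Next I would establish the one-step increment bound \eqref{eq:increment} for $t\geq k$, now with the conditioning enlarged to contain $\hist_k$. Let $\tilde R_t$ count the edges accepted after time $k$. For any nonanticipating continuation policy, the conditional drift of $\tilde R_t+V_t^Q(T_t,U_t)$ given $(\hist_t,U_t)$ is at most $e_t(\hist_t)$. For $\pi_Q$ it is at least $-e_t(\hist_t)$. The reason is the same as before: the difference between the expectations of $G_t$ under $P_t$ and $Q_t$ is bounded by TV times the span, and $\E_{Q_t}G_t=V_t^Q$.

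Summing these bounds from $t=k$ to $n-1$ and using $V_n^Q=0$, I take expectations conditional on $\hist_k$ and use the tower property. This gives two inequalities. The optimal continuation value under $P$ is at most $V_k^Q(T_k,U)+\E_P[\sum_{t\geq k}e_t\mid\hist_k]$. The continuation value of $\pi_Q$ is at least $V_k^Q(T_k,U)-\E_P[\sum_{t\geq k}e_t\mid\hist_k]$. Subtracting the second from the first yields the factor $2$.

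Two points need care. First, the supremum over continuation policies may use internal randomness. As in the theorem, this randomness is independent of graph growth, and the growth law is exogenous. Hence the conditional law of the future histories given $\hist_k$ is the same for every policy, and the error sum does not depend on the policy.

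The second point, which I expect to be the main obstacle, is the measurability of the data-dependent forecast. The bounds are applied conditionally on $\hist_k$. The schedule is fixed once $\hist_k$ is known, so $Q_t(\cdot\mid T_t)$ for $t\geq k$ is a deterministic affine kernel under this conditioning. The backward induction of Proposition~\ref{prop:bellman} is therefore valid pathwise in $\hist_k$. I would state this explicitly, to rule out the forecast depending on information revealed after time $k$, which would break the Bellman identity $\E_{Q_t}G_t=V_t^Q$.
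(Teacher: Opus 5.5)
Your proposal is correct and matches the paper's proof: condition on $\hist_k$, treat the $\hist_k$-selected schedule as deterministic, and telescope the one-step bound \eqref{eq:increment} from the common potential $V_k^Q(T_k,U)$ down to $V_n^Q=0$. This gives an upper bound for the optimal continuation and a lower bound for $\pi_Q$, whose difference yields the factor two. Your remarks on internal randomness and on fixing the forecast after time $k$ spell out what the paper's one-line proof leaves implicit.
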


\begin{proof}
Condition on $\hist_k$ and telescope from the common potential $V_k^Q(T_k,U)$ instead of \eqref{eq:initial}.
\end{proof}

This conditional formulation is essential for learning. It does not justify replacing the forecast arbitrarily after each new observation: such replacements would change the potential being telescoped. The estimate also relies on exogenous graph growth. If matching decisions influence future parent probabilities, a policy-independent error sum need not exist.

\paragraph*{The rooted plane-oriented convention.}
The stability argument also applies to the rooted plane-oriented convention: start from a single root and use attachment weight $w_t(v)=1+d_t^+(v)$, where $d_t^+$ is the number of children. Here $\sum_vw_t(v)=2t-1$. For deterministic affine weight forecasts $q_t(w)=a_t+\beta_t w$, impose $ta_t+(2t-1)\beta_t=1$ with $\beta_t\geq0$ and $0\leq q_t(w)\leq1$ for every integer $1\leq w\leq t$. Replace degree by weight in \eqref{eq:b}, run the recurrence down to $t=1$, and use initial value $c_1+b_1(1)$. A parent increases its weight by one and a newborn has weight one, so \eqref{eq:G} and its unit-span proof are unchanged. Thus \eqref{eq:robust} holds with the error sum starting at $t=1$. The explicit leaf formulas and learning constants below refer to the seed-edge convention.

\section{Sharpness for general misspecification}\label{sec:sharpness}

\begin{proposition}[A tight plug-in example]\label{prop:sharp}
For every $0<\varepsilon\leq1/3$, there is a true law $P_+$ on four-vertex trees and a uniform forecast $Q$ such that
\[
 \err_4(P_+,Q)=\varepsilon,\qquad \reg_4(P_+,Q)=2\varepsilon.
\]
\end{proposition}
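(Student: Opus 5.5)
The plan is to build the example by hand with $n=4$, using the uniform forecast $Q$ ($q_2\equiv1/2$, $q_3\equiv1/3$). I will compute $\pi_Q$ explicitly from Proposition~\ref{prop:bellman}, design $P_+$ so that it agrees with $Q$ at time $2$ and is tilted by total variation exactly $\varepsilon$ at time $3$, and then read off $\OPT(P_+)$ from the sandwich \eqref{eq:two-sided} of Theorem~\ref{thm:robust}.

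\emph{Step 1: the Bellman quantities for $Q$.} From $b_4\equiv0$, the recurrence \eqref{eq:b} gives $b_3(d)=\tfrac13$ for all $d$. Then
\[
 b_2(d)=\tfrac12\cdot\tfrac13+\tfrac12\max\{\tfrac13,\tfrac23\}=\tfrac12,
\]
and \eqref{eq:c} gives $c_3=0$, $c_2=\tfrac13$, hence $V_{\mathrm{init}}^Q=\tfrac43$ by \eqref{eq:initial}. Since $2b_2(1)=1$, the weak-inequality tie convention makes $\pi_Q$ accept the seed edge $\{1,2\}$. Rule \eqref{eq:accept} then makes $\pi_Q$ accept every feasible later edge, because $b_3(d+1)+b_3(1)=\tfrac23\le1$ and $b_4\equiv0$. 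So after the seed, $\pi_Q$ gains one more edge exactly when vertex $4$ attaches to vertex $3$. This exploitation of the tie is where the coefficient two comes from.

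\emph{Step 2: the true law.} Under $P_+$, vertex $3$ attaches uniformly to $p\in\{1,2\}$, so the $t=2$ error term is $0$. Let $o$ denote the other vertex of $\{1,2\}$. Conditionally on $p$, set
\[
 P_3(o)=\tfrac13+\varepsilon,\qquad P_3(p)=\tfrac13,\qquad P_3(3)=\tfrac13-\varepsilon.
\]
This is a legal distribution precisely when $\varepsilon\le1/3$, and its total variation from the uniform law on $\{1,2,3\}$ is $\varepsilon$. Hence $\err_4(P_+,Q)=\varepsilon$.

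\emph{Step 3: the values.} Under $P_+$, the policy $\pi_Q$ earns
\[
 \E_{P_+}|M_4^{\pi_Q}|=1+P_3(3)=\tfrac43-\varepsilon.
\]
For $\OPT$, exhibit the nonanticipating policy that rejects the seed, accepts $\{p,3\}$, and then accepts vertex $4$ if it attaches to the still-free vertex $o$. It earns $1+\tfrac13+\varepsilon=\tfrac43+\varepsilon$. Theorem~\ref{thm:robust} gives $\OPT(P_+)\le V_{\mathrm{init}}^Q+\err_4=\tfrac43+\varepsilon$, so this policy is optimal and $\OPT(P_+)=\tfrac43+\varepsilon$. Therefore $\reg_4(P_+,Q)=2\varepsilon$.

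The only delicate point is the design in Step 2. The tilt must be history dependent, placing extra mass on whichever seed vertex did \emph{not} receive vertex $3$. Only then does the rejecting policy gain $\varepsilon$ while $\pi_Q$ simultaneously loses $\varepsilon$, so that both inequalities in \eqref{eq:two-sided} are tight at once. Checking that the upper bound from Theorem~\ref{thm:robust} matches the exhibited policy avoids a separate dynamic-programming verification of $\OPT$.
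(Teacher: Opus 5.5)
Your proof is correct and uses exactly the paper's construction: the same history-dependent tilt toward the seed endpoint that vertex $3$ did not choose, the same uniform-forecast values $b_3\equiv1/3$ and $b_2(1)=1/2$ with the weak tie forcing acceptance of the seed, and the same two values $4/3\mp\varepsilon$. The only difference is how you certify $\OPT(P_+)=4/3+\varepsilon$: you invoke the upper half of \eqref{eq:two-sided}, which also shows that both halves of that sandwich are attained here, whereas the paper argues directly that after rejecting the seed, also rejecting vertex $3$'s edge leaves at most one edge in total.
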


\begin{proof}
Vertex $3$ chooses its parent $i\in\{1,2\}$ uniformly. Write $j$ for the other seed endpoint. At the last arrival set
\begin{equation}\label{eq:lowerlaw}
 P_+(i\mid\hist_3)=\tfrac13,\quad
 P_+(j\mid\hist_3)=\tfrac13+\varepsilon,\quad
 P_+(3\mid\hist_3)=\tfrac13-\varepsilon.
\end{equation}
The only forecast error occurs at this last step and equals $\varepsilon$.
Under the uniform forecast, $b_3(1)=b_3(2)=1/3$ and $b_2(1)=1/2$. Our tie convention therefore accepts the seed. Its only later opportunity is the edge from $4$ to $3$, giving value $4/3-\varepsilon$.

Rejecting the seed, accepting $\{i,3\}$, and accepting $\{j,4\}$ if offered gives $4/3+\varepsilon$. This is optimal: conditional on rejecting the seed, rejecting the next edge can give at most one edge in total. Conditional on accepting the seed, the value is the one already computed. The difference is $2\varepsilon$.
\end{proof}

\begin{proposition}[Linear dependence cannot be removed]\label{prop:minimax}
For every $0<\varepsilon\leq1/3$, let $Q$ be the uniform forecast and let $P_+,P_-$ be the pair defined below. For every online policy given $Q$ but not the identity of the true law, some $P\in\{P_+,P_-\}$ satisfies
\[
 \err_4(P,Q)=\varepsilon,\qquad
 \OPT(P)-\E_P|M_4|\geq\varepsilon.
\]
This holds for randomised policies.
\end{proposition}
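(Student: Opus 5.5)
The plan is a two-point argument. The seed decision is made before anything distinguishes the two laws, and each law punishes one of the two seed actions by $2\varepsilon$.

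\emph{Construction.} Keep $P_+$ exactly as in the proof of Proposition~\ref{prop:sharp}: vertex $3$ picks $i\in\{1,2\}$ uniformly, and the last arrival follows \eqref{eq:lowerlaw}. Define $P_-$ by the same uniform first step and, at the last arrival,
\[
 P_-(i\mid\hist_3)=\tfrac13,\quad
 P_-(j\mid\hist_3)=\tfrac13-\varepsilon,\quad
 P_-(3\mid\hist_3)=\tfrac13+\varepsilon .
\]
Both laws agree with the uniform forecast at the step where vertex $3$ arrives. At the last step each differs from the uniform distribution by total variation exactly $\varepsilon$, so $\err_4(P_\pm,Q)=\varepsilon$. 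The hypothesis $\varepsilon\le1/3$ guarantees that these are probabilities.

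\emph{Conditional values of the seed action.} Accepting the seed leaves only the edge $\{3,4\}$ as a possible addition. Its value is therefore $1+P(3\mid\hist_3)$, which equals $4/3-\varepsilon$ under $P_+$ and $4/3+\varepsilon$ under $P_-$.

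Now suppose the seed is rejected. Rejecting $\{i,3\}$ as well leaves at most one edge in total, worth at most $1\le 4/3-\varepsilon$. Accepting $\{i,3\}$ leaves only a possible edge $\{j,4\}$. This gives $1+P(j\mid\hist_3)$, which equals $4/3+\varepsilon$ under $P_+$ and $4/3-\varepsilon$ under $P_-$.

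Hence $\OPT(P_\pm)=4/3+\varepsilon$. The best value achievable after the wrong seed action is $4/3-\varepsilon$ under either law, where the wrong action is acceptance under $P_+$ and rejection under $P_-$. This is the same case analysis as in Proposition~\ref{prop:sharp}, carried out for both laws.

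\emph{Indistinguishability and averaging.} At time $2$ the policy has seen only the seed edge. Its seed decision is therefore a Bernoulli variable whose parameter $p$ is the same under $P_+$ and $P_-$; internal randomness is independent of the growth. Conditioning on the seed action and applying the bounds above gives
\[
 \E_{P_+}|M_4|\le p(4/3-\varepsilon)+(1-p)(4/3+\varepsilon),\qquad
 \E_{P_-}|M_4|\le p(4/3+\varepsilon)+(1-p)(4/3-\varepsilon).
\]
The two regrets are therefore at least $2\varepsilon p$ and $2\varepsilon(1-p)$, which sum to at least $2\varepsilon$. So one of them is at least $\varepsilon$, and this covers randomised policies.

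The main obstacle is making the conditional upper bounds hold for every continuation, including randomised ones after the seed decision. I will handle this by bounding, for each fixed seed action, the optimal continuation value under each law. The continuation only has to choose at the $\{i,3\}$ edge and the final edge, so this is a finite case check, and any randomised continuation is a mixture of these cases.
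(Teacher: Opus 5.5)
Your proposal is correct and follows essentially the same approach as the paper: the sign-reversed $P_-$, the observation that the seed decision has the same distribution under both laws, and the $2\varepsilon p$ versus $2\varepsilon(1-p)$ averaging. You also write out the conditional continuation values ($4/3\pm\varepsilon$, and the bound $1\le 4/3-\varepsilon$ after rejecting both early edges) that the paper leaves to the phrase ``even allowing the best continuation after that decision'' and to the case analysis in Proposition~\ref{prop:sharp}.
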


\begin{proof}
Obtain $P_-$ by reversing the signs of the two perturbations in \eqref{eq:lowerlaw}. Both laws have the same history distribution before the initial decision. If the algorithm accepts the seed with probability $r$, its regret under $P_+$ is at least $2r\varepsilon$, and under $P_-$ at least $2(1-r)\varepsilon$, even allowing the best continuation after that decision. Their maximum is at least $\varepsilon$.
\end{proof}

These statements distinguish two notions of sharpness. Proposition~\ref{prop:sharp} establishes the coefficient for the specified deterministic plug-in policy; Proposition~\ref{prop:minimax} leaves a factor-two gap for the best robust policy. Both true laws distinguish two degree-one vertices through their histories. Neither is a constant-parameter instance of \eqref{eq:mixture}.

\section{Calibration in the uniform--preferential family}\label{sec:calibration}

Let $P_\theta$ denote \eqref{eq:mixture} with a fixed $\theta\in[0,1]$, and let $L_t$ be the number of degree-one vertices in $T_t$. For a forecast with constant parameter $\widehat\theta$, write $\reg_n(\theta,\widehat\theta)=\reg_n(P_\theta,P_{\widehat\theta})$.

\begin{proposition}[Leaf-count error identity]\label{prop:tv}
For every tree $T_t$,
\begin{equation}\label{eq:tv-leaves}
 \TV(P_\theta(\cdot\mid T_t),P_{\widehat\theta}(\cdot\mid T_t))
 =|\theta-\widehat\theta|\frac{L_t(t-2)}{2t(t-1)}.
\end{equation}
Consequently, with $\ell_t(\theta)=\E_\theta L_t$ and $H_m=\sum_{j=1}^m j^{-1}$,
\begin{align}
 \reg_n(\theta,\widehat\theta)
 &\leq |\theta-\widehat\theta|\sum_{t=2}^{n-1}
       \ell_t(\theta)\frac{t-2}{t(t-1)}\label{eq:leaf-regret}\\
 &\leq (n-2H_{n-1})|\theta-\widehat\theta|
 \leq(n-2)|\theta-\widehat\theta|.\label{eq:param-regret}
\end{align}
\end{proposition}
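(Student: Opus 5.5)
We compute the total variation directly, then feed it into Theorem~\ref{thm:robust}, and finally bound the expected leaf count.

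\emph{Step 1: the TV identity.} For a vertex of degree $d$,
\[
 q_t^\theta(d)-q_t^{\widehat\theta}(d)=(\theta-\widehat\theta)\Bigl(\frac{d}{2(t-1)}-\frac1t\Bigr).
\]
This difference has the sign of $\theta-\widehat\theta$ exactly when $d>2(t-1)/t$. Since $2(t-1)/t\in[1,2)$, that means $d\geq2$, except at $t=2$, where $2(t-1)/t=1$ and both vertices have degree one, so every difference vanishes. For $t\geq3$, the vertices with a negative-signed difference are precisely the $L_t$ leaves. The differences sum to zero, so the total variation equals the sum of the negative parts:
\[
 \TV=|\theta-\widehat\theta|\,L_t\Bigl(\frac1t-\frac1{2(t-1)}\Bigr)
 =|\theta-\widehat\theta|\frac{L_t(t-2)}{2t(t-1)}.
\]
At $t=2$ this formula also returns $0$, consistent with the direct computation.

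\emph{Step 2: the regret bound \eqref{eq:leaf-regret}.} Here the true law $P_\theta$ depends only on $T_t$. Theorem~\ref{thm:robust} gives $\reg_n\leq 2\err_n$. Taking $\E_\theta$ of Step~1, the factor $2$ cancels the $\tfrac12$, and each summand becomes $|\theta-\widehat\theta|\,\ell_t(\theta)(t-2)/(t(t-1))$.

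\emph{Step 3: the explicit bound \eqref{eq:param-regret}.} The crude bound $L_t\leq t-1$ (a tree on $t\geq 2$ vertices has at most $t-1$ leaves) gives summands at most $(t-2)/t=1-2/t$. Summing over $t=2,\dots,n-1$ yields
\[
 (n-2)-2\bigl(H_{n-1}-1\bigr)=n-2H_{n-1}.
\]
The claimed inequality $n-2H_{n-1}\leq n-2$ holds because $H_{n-1}\geq1$ for $n\geq2$. All the bounds are deterministic in $L_t$, so no probabilistic estimate of $\ell_t$ is needed.

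\emph{Main obstacle.} The only delicate point is the sign analysis in Step~1. One must check that no vertex of degree $\geq2$ falls on the leaf side of the threshold, and handle the boundary cases $t=2$ and $t=3$ (where $2(t-1)/t=4/3$) carefully. Everything else is bookkeeping.
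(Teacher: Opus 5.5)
Your proof is correct and follows the paper's argument essentially line for line: the sign analysis of $(\theta-\widehat\theta)\bigl(d/(2(t-1))-1/t\bigr)$ to get the identity, Theorem~\ref{thm:robust} for \eqref{eq:leaf-regret}, and $L_t\le t-1$ to reduce each summand to $(t-2)/t$. One small slip: a tree on two vertices has $L_2=2$ leaves, so $L_t\le t-1$ holds only for $t\ge3$, as the paper states; this does no harm because the $t=2$ summand carries the factor $t-2=0$.
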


\begin{proof}
The difference of the kernels is $(\theta-\widehat\theta)(d/(2(t-1))-1/t)$. For $t>2$, the bracket is negative exactly at degree one and positive at degrees at least two. Summing the negative part gives \eqref{eq:tv-leaves}; at $t=2$ both sides vanish. Theorem~\ref{thm:robust} yields \eqref{eq:leaf-regret}. Since $L_t\leq t-1$ for $t\geq3$, the remaining sum is at most $\sum_{t=2}^{n-1}(t-2)/t=n-2H_{n-1}$.
\end{proof}

The exact expected-error bound is computable in linear time because
\begin{equation}\label{eq:leaf-mean}
 \ell_2(\theta)=2,\qquad
 \ell_{t+1}(\theta)=\bigl(1-\alpha_t(\theta)\bigr)\ell_t(\theta)+1,
 \quad \alpha_t(\theta)=\frac{1-\theta}{t}+\frac{\theta}{2(t-1)}.
\end{equation}
Indeed, the new vertex adds a leaf, and a leaf parent ceases to be a leaf. The same argument allows distinct deterministic true and forecast parameter schedules; the summand in \eqref{eq:leaf-regret} then contains $|\theta_t-\widehat\theta_t|$.

We will use the uniform conditional consequence of Corollary~\ref{cor:suffix}: for any history and unmatched set at time $k$, the continuation loss under parameter $\widehat\theta$ is at most
\begin{equation}\label{eq:suffix-param}
 (n-k)|\theta-\widehat\theta|.
\end{equation}
This follows pathwise from twice the right-hand side of \eqref{eq:tv-leaves} being at most $|\theta-\widehat\theta|$ at every step.

\section{Learning without external advice}\label{sec:learning}

Assume the true law is $P_\theta$ for an unknown fixed $\theta\in[0,1]$. Graph observations are available regardless of the matching decisions. We first obtain a uniform finite-sample estimator from a single leaf count and then apply \eqref{eq:suffix-param}.

\begin{lemma}[Leaf-count concentration and sensitivity]\label{lem:leaves}
For $k\geq4$, every $\theta\in[0,1]$, and every $u\geq0$,
\begin{align}
 \operatorname{Var}_\theta(L_k)&\leq(k-2)/4,\label{eq:var}\\
 \Pr_\theta\bigl(|L_k-\ell_k(\theta)|\geq u\bigr)
 &\leq2\exp\bigl(-2u^2/(k-2)\bigr),\label{eq:tail}\\
 \ell'_k(\theta)&\geq
 \frac{k-2}{8}-\frac{H_{k-2}}{4(k-1)}\geq\frac{k-3}{8}.\label{eq:derivative}
\end{align}
In particular, $\ell_k$ is strictly increasing on $[0,1]$.
\end{lemma}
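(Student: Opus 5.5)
The plan is to work entirely with the one-step recursion for the leaf count. When vertex $t+1$ arrives it is a new leaf, and the parent stops being a leaf exactly when it had degree one. Under $P_\theta$ each leaf is chosen with probability $\alpha_t(\theta)$ from \eqref{eq:leaf-mean}. So
\[
 L_{t+1}=L_t+1-X_t,\qquad \Pr_\theta(X_t=1\mid\hist_t)=\alpha_t(\theta)L_t .
\]
Taking expectations recovers \eqref{eq:leaf-mean}. Iterating the conditional mean gives an affine formula
\[
 \E_\theta[L_k\mid\hist_t]=\pi_{t,k}L_t+\text{const},\qquad \pi_{t,k}=\prod_{s=t}^{k-1}(1-\alpha_s(\theta))\in[0,1].
\]

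\textbf{Variance and tail.} Consider the Doob martingale $M_t=\E_\theta[L_k\mid\hist_t]$ for $t=2,\dots,k$. Here $M_2=\ell_k(\theta)$ is deterministic because $T_2$ is fixed, and $M_k=L_k$. By the affine formula, the increment is
\[
 M_{t+1}-M_t=\pi_{t+1,k}\bigl(\alpha_t L_t-X_t\bigr).
\]
Conditionally on $\hist_t$ this lies in an interval of length $\pi_{t+1,k}\leq1$. Hoeffding's lemma then bounds its conditional sub-Gaussian parameter by $1/4$, and its conditional variance by $1/4$, since a Bernoulli variable has variance at most $1/4$. There are $k-2$ increments. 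Summing the conditional variances of the orthogonal increments gives \eqref{eq:var}. The Azuma--Hoeffding inequality with ranges at most one gives \eqref{eq:tail}. (At $t=2$ we have $\alpha_2L_2=1$, so that increment is zero; this is harmless.)

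\textbf{Derivative.} Differentiate the recursion in $\theta$:
\[
 \ell'_{t+1}=(1-\alpha_t)\ell'_t-\alpha'_t\ell_t,\qquad \alpha'_t=\frac{1}{2(t-1)}-\frac1t=-\frac{t-2}{2t(t-1)}\leq0,
\]
with $\ell'_2=0$. Unrolling,
\[
 \ell'_k=\sum_{t=2}^{k-1}\pi_{t+1,k}\,\frac{t-2}{2t(t-1)}\,\ell_t .
\]
Every term is nonnegative, so it remains to bound $\pi_{t+1,k}$ and $\ell_t$ from below.

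For the first factor, $\alpha_s\leq1/s$ because $2(s-1)\geq s$. Hence $\pi_{t+1,k}\geq\prod_{s=t+1}^{k-1}(1-1/s)=t/(k-1)$.

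For the second factor, the same inequality gives $\ell_{t+1}\geq(1-1/t)\ell_t+1$. Induction from $\ell_2=2$ then yields $\ell_t\geq t/2$.

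Substituting both bounds, each summand is at least $t(t-2)/(4(k-1)(t-1))$. Writing $t(t-2)/(t-1)=(t-1)-1/(t-1)$ and summing over $t=2,\dots,k-1$ gives
\[
 \frac{(k-2)(k-1)/2-H_{k-2}}{4(k-1)}=\frac{k-2}{8}-\frac{H_{k-2}}{4(k-1)},
\]
which is the first inequality in \eqref{eq:derivative}. The second inequality reduces to $H_{k-2}\leq(k-1)/2$. This holds with equality at $k=4$, where $H_2=3/2$, and persists for larger $k$ because the right side grows by $1/2$ per step while the left grows by less. Strict monotonicity of $\ell_k$ follows since $(k-3)/8>0$ for $k\geq4$.

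The main obstacle is the derivative bound. One must get the sign of $\alpha'_t$ right: leaves become less attractive as $\theta$ grows. The constants from the crude bounds $\pi_{t+1,k}\geq t/(k-1)$ and $\ell_t\geq t/2$ must land exactly on the stated expression. I expect the computation above to achieve this without any sharper estimates.
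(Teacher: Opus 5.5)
Your proposal is correct and follows the paper's proof essentially step for step. Your Doob-martingale increments $\pi_{t+1,k}(\alpha_tL_t-X_t)$ are exactly the paper's weighted martingale differences $w_{t,k}\xi_{t+1}$, which give the variance bound and the conditional-range Azuma--Hoeffding tail; and your explicit unrolling of the derivative recursion with $\pi_{t+1,k}\geq t/(k-1)$ and $\ell_t\geq t/2$ is the same computation as the paper's telescoping of $tJ_{t+1}\geq(t-1)J_t+t(t-2)/(4(t-1))$.
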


\begin{proof}
Let $Y_{t+1}$ indicate that the next parent is a leaf. Its conditional mean is $\alpha_t(\theta)L_t$, and $L_{t+1}=L_t+1-Y_{t+1}$. Subtracting \eqref{eq:leaf-mean}, we obtain
\[
 X_{t+1}=(1-\alpha_t(\theta))X_t+\xi_{t+1},\qquad
 X_t=L_t-\ell_t(\theta),\quad
 \xi_{t+1}=\alpha_t(\theta)L_t-Y_{t+1}.
\]
The $\xi_{t+1}$ are martingale differences of conditional variance at most $1/4$ and conditional range length one. Unrolling the recursion expresses $X_k$ as a sum of $k-2$ such differences with deterministic weights
\[
 w_{s,k}=\prod_{j=s+1}^{k-1}(1-\alpha_j(\theta))\in[0,1],\qquad 2\leq s<k.
\]
Orthogonality gives \eqref{eq:var}. The conditional exponential-moment bound for a centred variable of range length $w_{s,k}$ is $\exp(\lambda^2w_{s,k}^2/8)$. Iteration and Chernoff's bound give \eqref{eq:tail}, since $\sum_sw_{s,k}^2\leq k-2$.

For sensitivity, $\alpha_t(\theta)\leq1/t$ implies $\ell_t(\theta)\geq t/2$ by induction from $\ell_2=2$. Differentiation of \eqref{eq:leaf-mean}, with $J_t=\ell'_t(\theta)$, gives
\[
 J_{t+1}=(1-\alpha_t(\theta))J_t
       +\frac{t-2}{2t(t-1)}\ell_t(\theta),\qquad J_2=0.
\]
Thus $J_t\geq0$ and
\[
 tJ_{t+1}\geq(t-1)J_t+\frac{t(t-2)}{4(t-1)}
 =(t-1)J_t+\frac14\left(t-1-\frac1{t-1}\right).
\]
Summing for $t=2,\ldots,k-1$ proves the first inequality in \eqref{eq:derivative}. The second follows from $2H_{k-2}\leq k-1$ for $k\geq4$.
\end{proof}

Define the ideal clipped inverse estimator
\begin{equation}\label{eq:estimator}
 \widetilde\theta_k=
 \begin{cases}
 0,&L_k\leq\ell_k(0),\\
 \ell_k^{-1}(L_k),&\ell_k(0)<L_k<\ell_k(1),\\
 1,&L_k\geq\ell_k(1).
 \end{cases}
\end{equation}
For computation, use bisection to obtain $\widehat\theta_k\in[0,1]$ with
$|\widehat\theta_k-\widetilde\theta_k|\leq\eta$. Each evaluation of \eqref{eq:leaf-mean} takes $O(k)$ operations, so bisection takes $O(k\log(1/\eta))$ operations for $0<\eta<1$.

\begin{corollary}[Root-mean-square estimation error]\label{cor:estimation}
Uniformly over $\theta\in[0,1]$,
\begin{equation}\label{eq:estimation-bound}
 \left(\E_\theta|\widehat\theta_k-\theta|^2\right)^{1/2}
 \leq\min\left\{1,\frac{4\sqrt{k-2}}{k-3}+\eta\right\}.
\end{equation}
\end{corollary}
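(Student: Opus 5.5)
The plan is to combine the variance bound \eqref{eq:var} with the derivative lower bound \eqref{eq:derivative} of Lemma~\ref{lem:leaves}. The key claim is that the clipped inverse map is Lipschitz with constant $1/\inf\ell_k'$. The bisection error $\eta$ is then added by the triangle inequality in $L^2$. The bound by $1$ is immediate, since both $\widehat\theta_k$ and $\theta$ lie in $[0,1]$, so $|\widehat\theta_k-\theta|\leq1$ pathwise. It therefore suffices to prove $(\E_\theta|\widetilde\theta_k-\theta|^2)^{1/2}\leq 4\sqrt{k-2}/(k-3)$. Minkowski's inequality applied to $\widehat\theta_k-\theta=(\widehat\theta_k-\widetilde\theta_k)+(\widetilde\theta_k-\theta)$ then adds at most $\eta$.

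For the main step, fix $k\geq4$. By Lemma~\ref{lem:leaves}, $\ell_k$ is strictly increasing and differentiable on $[0,1]$ with $\ell_k'\geq(k-3)/8>0$. I will show the pathwise inequality
\[
|\widetilde\theta_k-\theta|\leq\frac{8}{k-3}\,|L_k-\ell_k(\theta)|
\]
by cases on \eqref{eq:estimator}.
\begin{itemize}
\item In the interior case $\widetilde\theta_k=\ell_k^{-1}(L_k)$, the mean value theorem gives $|L_k-\ell_k(\theta)|=|\ell_k(\widetilde\theta_k)-\ell_k(\theta)|\geq\frac{k-3}{8}|\widetilde\theta_k-\theta|$.
\item If $L_k\leq\ell_k(0)$, then $\widetilde\theta_k=0$ and $\ell_k(\theta)-L_k\geq\ell_k(\theta)-\ell_k(0)\geq\frac{k-3}{8}\theta$. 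This is the same inequality.
\item The case $L_k\geq\ell_k(1)$ is symmetric.
\end{itemize}
The clipping therefore only helps: projection onto $[0,1]$ is a contraction, and $\theta\in[0,1]$.

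Squaring, taking expectations, and using \eqref{eq:var} gives
\[
\E_\theta|\widetilde\theta_k-\theta|^2\leq\frac{64}{(k-3)^2}\operatorname{Var}_\theta(L_k)\leq\frac{64(k-2)}{4(k-3)^2}=\frac{16(k-2)}{(k-3)^2}.
\]
Taking square roots gives $4\sqrt{k-2}/(k-3)$, which is exactly the claimed constant. Uniformity in $\theta$ holds because both \eqref{eq:var} and \eqref{eq:derivative} are uniform over $[0,1]$.

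I expect no serious obstacle. The one point needing care is the boundary cases of the clipped estimator, where $\ell_k^{-1}$ is not applied. The one-sided mean value argument above handles them, since $\ell_k$ is increasing with a uniform derivative lower bound on the whole interval. A minor check is that $\eta$ enters additively outside the square root, and this is precisely what Minkowski's inequality provides.
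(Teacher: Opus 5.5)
Your proposal is correct and matches the paper's proof: the $8/(k-3)$ Lipschitz bound for the clipped inverse via \eqref{eq:derivative}, the variance bound \eqref{eq:var}, the $L^2$ triangle inequality for the bisection error $\eta$, and the pathwise bound $1$ for the truncation. Your case analysis of the clipping spells out the paper's remark that projecting $L_k$ onto $[\ell_k(0),\ell_k(1)]$ cannot increase its distance from $\ell_k(\theta)$.
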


\begin{proof}
Projection of $L_k$ onto $[\ell_k(0),\ell_k(1)]$ cannot increase its distance from $\ell_k(\theta)$. The inverse has Lipschitz constant at most $8/(k-3)$ by \eqref{eq:derivative}. Therefore
\[
 |\widetilde\theta_k-\theta|\leq\frac{8}{k-3}|L_k-\ell_k(\theta)|.
\]
Take $L^2$ norms and use \eqref{eq:var}. The triangle inequality in $L^2$ adds at most $\eta$ for bisection, and $|\widehat\theta_k-\theta|\leq1$ gives the truncation. The same expression bounds the expected absolute error by Cauchy--Schwarz.
\end{proof}

\paragraph*{The pilot-and-commit policy.}
Fix $4\leq k\leq n$. Reject every edge through time $k$, including the seed edge. Compute \eqref{eq:estimator} to precision $\eta$ from $L_k$. Compute the threshold policy for $P_{\widehat\theta_k}$ with terminal horizon $n$, and use it on subsequent arrivals. All existing vertices are free at the switch. The estimation phase is passive: its observations do not depend on the rejected decisions.

\begin{theorem}[An unknown-parameter guarantee]\label{thm:learning}
For $n\geq k\geq4$ and $0<\eta<1$, the policy $A_{k,\eta}$ satisfies, uniformly over $\theta\in[0,1]$,
\begin{equation}\label{eq:learning-regret}
 \OPT(P_\theta)-\E_\theta|M_n^{A_{k,\eta}}|
 \leq\left\lfloor\frac{k}{2}\right\rfloor
 +(n-k)\min\left\{1,\frac{4\sqrt{k-2}}{k-3}+\eta\right\}.
\end{equation}
Choosing $k=\min\{n,\max\{4,\lceil n^{2/3}\rceil\}\}$ and $\eta=1/n$ yields $O(n^{2/3})$ expected regret. The policy uses $O(n^2)$ arithmetic operations in total and $O(n)$ stored entries.
\end{theorem}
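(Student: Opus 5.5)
The plan is to split the regret at the switch time $k$ into a pilot loss and a post-switch loss, and to bound the two pieces with the matching-size bound and with Corollary~\ref{cor:suffix} in the form \eqref{eq:suffix-param}. Let $\pi^*$ be an optimal nonanticipating policy for $P_\theta$. Write $W_k(\hist_k,U)$ for the optimal continuation value under $P_\theta$ from time $k$ with unmatched set $U$, after the decision at time $k$ has been made.

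\textbf{Pilot loss.} We first compare $\pi^*$ with a policy that has every vertex free at time $k$. By optimality and conditioning, $\OPT(P_\theta)=\E_\theta[R_k^{\pi^*}+W_k(\hist_k,U_k^{\pi^*})]$. Freeing matched vertices cannot decrease the optimal continuation value: any continuation policy from the smaller unmatched set remains feasible from $[k]$ with the same history law, because growth is exogenous. Hence $W_k(\hist_k,U)\leq W_k(\hist_k,[k])$. Moreover, $R_k^{\pi^*}$ is the size of a matching in a tree on $k$ vertices, so it is at most $\lfloor k/2\rfloor$. Together these give
\[
\OPT(P_\theta)\leq\lfloor k/2\rfloor+\E_\theta W_k(\hist_k,[k]).
\]

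\textbf{Post-switch loss.} After the pilot, $A_{k,\eta}$ runs $\pi_{Q}$ with $Q=P_{\widehat\theta_k}$ from the state $(\hist_k,[k])$. The estimate $\widehat\theta_k$ is a function of $\hist_k$ only, and the forecast schedule is fixed from then on. Corollary~\ref{cor:suffix} therefore applies conditionally on $\hist_k$, and through \eqref{eq:suffix-param} it gives
\[
W_k(\hist_k,[k])-\E_\theta\bigl[\,|M_n^{A}|-R_k^{A}\,\big|\,\hist_k\bigr]\leq(n-k)\,|\theta-\widehat\theta_k|.
\]
Here $R_k^A=0$ because every edge through time $k$ is rejected. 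Taking expectations and applying Corollary~\ref{cor:estimation}, in its expected-absolute-error form via Cauchy--Schwarz, bounds $\E_\theta|\theta-\widehat\theta_k|$ by the minimum in \eqref{eq:learning-regret}. Adding the pilot loss proves \eqref{eq:learning-regret}.

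\textbf{Parameter choice.} Take $k\asymp n^{2/3}$ and $\eta=1/n$. The pilot term is $O(n^{2/3})$. The second term is $n\cdot O(k^{-1/2})+1=O(n^{2/3})$. The edge cases $k=4$ or $k=n$ only matter for bounded $n$ and are absorbed into the constant.

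\textbf{Complexity.} Bisection costs $O(k\log n)=O(n^{2/3}\log n)$ operations. Computing the threshold coefficients from Proposition~\ref{prop:bellman} costs $O(n^2)$ operations and $O(n)$ storage, and each arrival then takes $O(1)$ work.

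\textbf{Main obstacle.} The step needing the most care is the monotonicity claim $W_k(\hist_k,U)\leq W_k(\hist_k,[k])$, because Proposition~\ref{prop:bellman} covers only affine $Q$, not $P_\theta$ with optimal policies. I would prove it directly by simulation. A policy started from the larger free set imitates the optimal policy started from $U$, accepting exactly the edges that policy accepts. Every accepted edge stays feasible, since each vertex that is free for the imitated policy is also free for the imitator. The history law is unchanged because growth is exogenous.
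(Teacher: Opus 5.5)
Your proposal is correct and follows the paper's proof essentially step for step. It uses the same oracle split $\OPT(P_\theta)\leq\lfloor k/2\rfloor+\E_\theta V_k^\theta(T_k,[k])$, the conditional bound \eqref{eq:suffix-param} after the switch, and Corollary~\ref{cor:estimation} via Cauchy--Schwarz. The step you flag as the main obstacle is easier than you suggest: $P_\theta$ with constant $\theta$ is itself an affine kernel in the sense of Definition~\ref{def:affine}. So Proposition~\ref{prop:bellman} describes the true value function, and its nonnegative coefficients $b_k^\theta$ give the monotonicity directly, as the paper notes; your simulation argument is also valid.
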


\begin{proof}
Let $V_k^\theta(T,U)$ be the true-model optimal continuation. An oracle policy has accepted at most $\lfloor k/2\rfloor$ edges by time $k$. Additional free vertices cannot reduce its continuation value; this also follows directly from the nonnegative coefficients in Proposition~\ref{prop:bellman}. Hence
\begin{equation}\label{eq:pilot-oracle}
 \OPT(P_\theta)\leq\lfloor k/2\rfloor+
 \E_\theta V_k^\theta(T_k,[k]).
\end{equation}
Condition on the observed pilot history. The estimate is now fixed, all vertices are free, and \eqref{eq:suffix-param} bounds the continuation loss by $(n-k)|\theta-\widehat\theta_k|$. Average over pilot histories, use \eqref{eq:pilot-oracle}, and apply Corollary~\ref{cor:estimation}. The stated choice balances $k$ and $n/\sqrt{k}$. The estimator requires $O(k\log n)$ work and the threshold preprocessing $O(n^2)$; after that preprocessing each arrival takes $O(1)$ work.
\end{proof}

The theorem is an existence guarantee with an explicit polynomial-time policy. Discarding all pilot edges is convenient for comparison with the oracle, but is not claimed to be statistically or algorithmically optimal. Nor does the theorem establish a lower bound of order $n^{2/3}$.

\section{Learning with geometric updates}\label{sec:epoch}

The $n^{2/3}$ rate in Theorem~\ref{thm:learning} comes from a particular
pilot comparison. Observing the tree does not require rejecting its edges.
We now allow the estimate to change at geometric times and obtain a
uniformly smaller asymptotic regret bound. The proof controls the
sensitivity of individual Bellman prices and telescopes the fixed
true-model value function. It therefore does not require telescoping
different forecast value functions.

\subsection{Sensitivity of the Bellman prices}

Write $b_s^\theta$ for the Bellman coefficients of
Proposition~\ref{prop:bellman} under the kernel $P_\theta$.
For $2\leq s\leq n$, define
\begin{equation}\label{eq:sensitivity-factor}
 K_{s,n}=\prod_{r=s}^{n-1}\left(1+\frac{1}{2(r-1)}\right)-1,
 \qquad K_{n,n}=0.
\end{equation}
An empty product equals one. For $s\geq3$,
\begin{equation}\label{eq:factor-estimate}
 K_{s,n}\leq\sqrt{\frac{n-2}{s-2}}-1.
\end{equation}
Indeed, bound the logarithm of the product by
$\frac12\sum_{r=s}^{n-1}(r-1)^{-1}
\leq\frac12\log((n-2)/(s-2))$.

\begin{lemma}[Parameter sensitivity]\label{lem:sensitivity}
For every $\theta,\lambda\in[0,1]$, $2\leq s\leq n$, and legal degree $d$,
\begin{equation}\label{eq:price-sensitivity}
 |b_s^\theta(d)-b_s^\lambda(d)|
 \leq|\theta-\lambda|(d+1)K_{s,n}.
\end{equation}
Consequently, if an available edge with parent degree $d$ is offered
after time $t$, the loss in immediate reward plus the true optimal
continuation caused by the $\lambda$-threshold decision is at most
\begin{equation}\label{eq:local-action-loss}
 |\theta-\lambda|(d+4)K_{t+1,n}.
\end{equation}
\end{lemma}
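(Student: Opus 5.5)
The plan is a backward induction on $s$, from $s=n$ down to $s=2$, applied to the recurrence \eqref{eq:b}. Write $D_s(d)=|b_s^\theta(d)-b_s^\lambda(d)|$. The base case is immediate: $b_n\equiv0$ for both parameters and $K_{n,n}=0$.

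\emph{Induction step.} Set $A^\cdot=b_{s+1}^\cdot(d)$ and $B^\cdot=\max\{b_{s+1}^\cdot(d+1),1-b_{s+1}^\cdot(1)\}$, with the superscript being $\theta$ or $\lambda$. Then
\[
 b_s^\theta(d)-b_s^\lambda(d)
 =(1-q^\theta)(A^\theta-A^\lambda)+q^\theta(B^\theta-B^\lambda)
 +(q^\theta-q^\lambda)(B^\lambda-A^\lambda),
\]
where $q^\cdot=q_s^\cdot(d)$. I bound the three terms separately.
\begin{itemize}
\item Since the maximum is $1$-Lipschitz, $|B^\theta-B^\lambda|\le\max\{D_{s+1}(d+1),D_{s+1}(1)\}$. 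The induction hypothesis then gives $|B^\theta-B^\lambda|\le|\theta-\lambda|(d+2)K_{s+1,n}$.
\item The first term is at most $|\theta-\lambda|(d+1)K_{s+1,n}$, again by the induction hypothesis.
\item By Lemma~\ref{lem:span}, $B^\lambda-A^\lambda=h_s(d)\in[0,1]$. From \eqref{eq:mixture}, $|q^\theta-q^\lambda|=|\theta-\lambda|\,|d/(2(s-1))-1/s|\le|\theta-\lambda|\,d/(2(s-1))$ for $d\ge1$.
\end{itemize}
Combining, and writing $K=K_{s+1,n}$,
\[
 D_s(d)\le|\theta-\lambda|\Bigl[(d+1+q^\theta)K+\frac{d}{2(s-1)}\Bigr].
\]
The definition \eqref{eq:sensitivity-factor} gives $K_{s,n}=K+(1+K)/(2(s-1))$. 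It therefore suffices to show
\[
 q^\theta K+\frac{d}{2(s-1)}\le\frac{(d+1)(1+K)}{2(s-1)}.
\]
This reduces to $q^\theta\le (d+1)/(2(s-1))$. That holds because $q^\theta$ is a convex combination of $1/s$ and $d/(2(s-1))$, and both are at most $(d+1)/(2(s-1))$ when $d\ge1$. This step, and in particular the choice of the linear-in-$d$ weight $(d+1)$ and the factor $1+1/(2(r-1))$, is where the bookkeeping is tight, so I expect it to be the main obstacle. If the crude bound on $|q^\theta-q^\lambda|$ were too lossy, I would instead keep the exact signed expression.

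\emph{Action loss \eqref{eq:local-action-loss}.} By Proposition~\ref{prop:bellman}, which is valid in every legal state, the true continuation is separable. Accepting instead of rejecting therefore changes immediate reward plus true continuation by exactly
\[
 \Delta^\theta=1-b_{t+1}^\theta(1)-b_{t+1}^\theta(d+1),
\]
and the $\theta$-optimal action has the sign of $\Delta^\theta$. If the $\lambda$-threshold makes the same choice, there is no loss. Otherwise $\Delta^\theta$ and $\Delta^\lambda$ have weakly opposite signs, so the loss is
\[
 |\Delta^\theta|\le|\Delta^\theta-\Delta^\lambda|\le D_{t+1}(1)+D_{t+1}(d+1).
\]
Applying \eqref{eq:price-sensitivity} bounds this by $|\theta-\lambda|\bigl(2+(d+2)\bigr)K_{t+1,n}=|\theta-\lambda|(d+4)K_{t+1,n}$. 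The seed decision is handled the same way with $d=1$.
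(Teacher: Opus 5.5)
Your proposal is correct and follows the paper's proof essentially step for step. It uses the same backward induction on the subtracted Bellman recurrences and the same three ingredients: the unit span $0\le h_s(d)\le 1$ from Lemma~\ref{lem:span}, the bound $|q_s^\theta(d)-q_s^\lambda(d)|\le|\theta-\lambda|\,d/(2(s-1))$, and $q_s^\theta(d)\le(d+1)/(2(s-1))$ to close the recursion defining $K_{s,n}$. The action-loss bound comes from the same sign comparison of $\Delta_t(d;\theta)$ and $\Delta_t(d;\lambda)$, evaluated at degrees $1$ and $d+1$.
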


\begin{proof}
Write $\delta=|\theta-\lambda|$. The mixture probabilities satisfy
\[
 q_s^\theta(d)\leq\frac{d+1}{2(s-1)},\qquad
 |q_s^\theta(d)-q_s^\lambda(d)|
 \leq\frac{\delta d}{2(s-1)}.
\]
For the second inequality, the derivative of the kernel is
$d/(2(s-1))-1/s$; its absolute value obeys the stated bound also when
$d=1$.

Set $F^\gamma(d)=\max\{b_{s+1}^\gamma(d+1),
1-b_{s+1}^\gamma(1)\}$ for $\gamma\in\{\theta,\lambda\}$, and write
$e_s(d)=|b_s^\theta(d)-b_s^\lambda(d)|$.
Subtracting the two Bellman recurrences, using
$0\leq F^\lambda(d)-b_{s+1}^\lambda(d)\leq1$ from
Lemma~\ref{lem:span}, and taking absolute values gives
\begin{align*}
 e_s(d)&\leq(1-q_s^\theta(d))e_{s+1}(d)
                  +\frac{\delta d}{2(s-1)}\\
 &\quad+q_s^\theta(d)\max\{e_{s+1}(d+1),e_{s+1}(1)\}.
\end{align*}
Assume \eqref{eq:price-sensitivity} at $s+1$ and put $K=K_{s+1,n}$.
The right-hand side is at most
\begin{align*}
 &\delta\left((d+1)K+q_s^\theta(d)K+\frac{d}{2(s-1)}\right)\\
 &\qquad\leq\delta(d+1)\left[
 \left(1+\frac{1}{2(s-1)}\right)K+\frac{1}{2(s-1)}\right].
\end{align*}
The bracket equals $K_{s,n}$. The zero terminal prices start the
backward induction.

The true advantage of accepting over rejecting an available edge is
\[
 \Delta_t(d;\theta)=1-b_{t+1}^\theta(d+1)-b_{t+1}^\theta(1).
\]
If the $\lambda$-decision is suboptimal under $\theta$, the loss is
$|\Delta_t(d;\theta)|$ and is at most
$|\Delta_t(d;\theta)-\Delta_t(d;\lambda)|$. The same bound is zero or
nonnegative when the decisions agree. Applying
\eqref{eq:price-sensitivity} at degrees $d+1$ and $1$ proves
\eqref{eq:local-action-loss}, including the prescribed tie convention.
\end{proof}

\subsection{Degree moments and dependent estimation error}

Let $S_t=\sum_{v\leq t}d_t(v)^2$, and let $D_t$ be the degree of the
parent selected by vertex $t+1$, before its degree increases. Write
$\mu_t=\E_\theta[D_t\mid\hist_t]$.

\begin{lemma}[Uniform degree-moment bounds]\label{lem:degree-moments}
For every $t\geq2$ and $\theta\in[0,1]$,
\begin{align}
 \E_\theta S_t&\leq2(t-1)H_{t-1},\label{eq:degree-first}\\
 \E_\theta S_t^2&\leq t(t-1)(6H_{t-1}^2-4),\label{eq:degree-second}\\
 \bigl(\E_\theta\mu_t^2\bigr)^{1/2}
 &\leq\sqrt{3}\,H_{t-1}.\label{eq:parent-rms}
\end{align}
\end{lemma}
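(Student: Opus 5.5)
The plan is to derive exact one-step recursions for $S_t$ and $S_t^2$ and close them by induction on $t$. Write $\mathcal F_t$ for the $\sigma$-field of $\hist_t$. When vertex $t+1$ attaches to a parent of degree $D_t$, that parent's square goes from $D_t^2$ to $(D_t+1)^2$ and a new leaf contributes $1$. Hence
\[
 S_{t+1}=S_t+2D_t+2 .
\]
Under the mixture kernel \eqref{eq:mixture},
\[
 \mu_t=\sum_v q_t^\theta(d_t(v))\,d_t(v)=(1-\theta)\frac{2(t-1)}{t}+\theta\,\frac{S_t}{2(t-1)} .
\]
By Cauchy--Schwarz, $S_t\geq (2(t-1))^2/t$, so $S_t/(2(t-1))\geq 2(t-1)/t$. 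Therefore the convex combination satisfies
\[
 \mu_t\leq \frac{S_t}{2(t-1)}
\]
pathwise and uniformly in $\theta$. This single inequality drives all three bounds.

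For \eqref{eq:degree-first}, take expectations to get
\[
 \E_\theta S_{t+1}\leq\Bigl(1+\frac1{t-1}\Bigr)\E_\theta S_t+2 .
\]
Start from $S_2=2=2\cdot1\cdot H_1$. The inductive step is the identity
\[
 \frac{t}{t-1}\,2(t-1)H_{t-1}+2=2tH_{t-1}+2=2tH_t ,
\]
so the bound propagates.

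For \eqref{eq:degree-second}, square the recursion:
\[
 S_{t+1}^2=S_t^2+4(D_t+1)S_t+4(D_t+1)^2 .
\]
In the middle term, condition on $\mathcal F_t$ and use $\E[D_t\mid\mathcal F_t]=\mu_t\leq S_t/(2(t-1))$. This gives a contribution $\bigl(1+\tfrac{2}{t-1}\bigr)S_t^2$ plus $4S_t$. The last term would naively require a third degree moment. I avoid this with the crude bound $D_t\leq t-1$, so $(D_t+1)^2\leq t(D_t+1)$, whose conditional mean is at most $t\bigl(S_t/(2(t-1))+1\bigr)$. Taking expectations and inserting \eqref{eq:degree-first} yields
\[
 \E S_{t+1}^2\leq\frac{t+1}{t-1}\,\E S_t^2+c_1\,tH_{t-1}+c_2\,t ,
\]
with explicit small constants $c_1,c_2$. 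I then verify by induction that $t(t-1)(6H_{t-1}^2-4)$ is a supersolution. The factor $(t+1)/(t-1)$ maps $t(t-1)$ to $(t+1)t$. The increment $6(t+1)t\bigl(H_t^2-H_{t-1}^2\bigr)\approx 12H_{t-1}+6/t$ per unit has to absorb the linear-in-$tH$ forcing terms. The base case $t=2$ gives $S_2^2=4=2\cdot1\cdot(6-4)$, which is exactly tight.

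The main obstacle is this constant bookkeeping in the second-moment step. The constants $6$ and $-4$ leave little slack, so I may need to keep the exact form $(D_t+1)^2=D_t^2+2D_t+1$ with $D_t^2\leq(t-1)D_t$, rather than the cruder $t(D_t+1)$, and check small $t$ separately.

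Finally, \eqref{eq:parent-rms} follows from the pathwise bound $\mu_t^2\leq S_t^2/(4(t-1)^2)$ and \eqref{eq:degree-second}:
\[
 \E\mu_t^2\leq\frac{t(6H_{t-1}^2-4)}{4(t-1)}\leq\frac{2(6H_{t-1}^2-4)}{4}\leq 3H_{t-1}^2 ,
\]
using $t/(t-1)\leq2$ for $t\geq2$.
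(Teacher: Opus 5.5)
Your route is the paper's: the bound $\mu_t\le S_t/(2(t-1))$ (your Cauchy--Schwarz step is the $j=1$ case of the paper's size-biasing inequality), the update $S_{t+1}=S_t+2D_t+2$, the first-moment induction, and the final step for $\E_\theta\mu_t^2$ all match. The one thing to fix is the second-moment step. There the exact form you list as a fallback is required, not optional.

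With the crude bound $(D_t+1)^2\le t(D_t+1)$ you get
\[
 \E_\theta S_{t+1}^2\le\frac{t+1}{t-1}\,\E_\theta S_t^2+(12t-8)H_{t-1}+4t .
\]
The supersolution check needs the forcing to be at most $t(t+1)\cdot 6(H_t^2-H_{t-1}^2)=12(t+1)H_{t-1}+6+6/t$. This reduces to $20H_{t-1}+6+6/t\ge4t$, which fails for every $t\ge20$. Checking small $t$ separately does not repair this. After dividing by $t(t+1)$, the forcing term $4t$ contributes about $4/(t+1)$ per step, and this is not absorbed by the available increment. Iterating the crude recursion therefore gives a bound that exceeds $t(t-1)(6H_{t-1}^2-4)$ by order $t^2\log t$ for large $t$. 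So this route cannot prove \eqref{eq:degree-second}.

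Your fallback $D_t^2\le(t-1)D_t$ does work. It gives $\E_\theta[(D_t+1)^2\mid\hist_t]\le(t+1)\mu_t+1$, so in particular $\E_\theta[D_t^2\mid\hist_t]\le S_t/2$, exactly the paper's bound. The recursion becomes
\[
 \E_\theta S_{t+1}^2\le\frac{t+1}{t-1}\,\E_\theta S_t^2+\Bigl(6+\frac{4}{t-1}\Bigr)\E_\theta S_t+4 ,
\]
which is the paper's recursion. Its forcing is at most $(12t-4)H_{t-1}+4\le12tH_{t-1}$, comfortably below $12(t+1)H_{t-1}+6+6/t$. The induction therefore closes with slack from the tight base case $t=2$, and no small-$t$ checks are needed.

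The paper reaches the same $S_t/2$ bound through the $j=2$ size-biasing comparison with $\theta=1$ followed by $d_v\le t-1$. Your pathwise use of $D_t\le t-1$ needs only the $j=1$ comparison, which is slightly more elementary.
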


\begin{proof}
For a fixed tree, size biasing the degrees increases their first and
second moments. More explicitly, the nonnegative covariance of $d$
and $d^j$ under the uniform vertex distribution gives
\[
 \frac1t\sum_vd_v^j
 \leq\frac{\sum_vd_v^{j+1}}{2(t-1)},\qquad j=1,2.
\]
Both conditional moments in the mixture are therefore bounded by
those at $\theta=1$. Since $d_v\leq t-1$,
\[
 \mu_t\leq\frac{S_t}{2(t-1)},\qquad
 \E_\theta[D_t^2\mid\hist_t]
 \leq\frac{\sum_vd_v^3}{2(t-1)}\leq\frac{S_t}{2}.
\]
The update is $S_{t+1}=S_t+2D_t+2$. With
$m_t=\E_\theta S_t$ and $u_t=\E_\theta S_t^2$, it follows that
\begin{align*}
 m_{t+1}&\leq\left(1+\frac1{t-1}\right)m_t+2,\\
 u_{t+1}&\leq\left(1+\frac2{t-1}\right)u_t
       +\left(6+\frac4{t-1}\right)m_t+4.
\end{align*}
Dividing the first recurrence by $t$ and using $m_2=2$ gives
\eqref{eq:degree-first}. In the second recurrence,
\[
 \left(6+\frac4{t-1}\right)m_t+4
 \leq(12t-4)H_{t-1}+4\leq12tH_{t-1}.
\]
Consequently,
\[
 \frac{u_{t+1}}{t(t+1)}
 \leq\frac{u_t}{t(t-1)}+\frac{12H_{t-1}}{t+1}
 \leq\frac{u_t}{t(t-1)}+6(H_t^2-H_{t-1}^2).
\]
Telescope from $u_2=4$ to obtain \eqref{eq:degree-second}.
Finally,
\[
 \E_\theta\mu_t^2
 \leq\frac{t}{4(t-1)}(6H_{t-1}^2-4)
 \leq3H_{t-1}^2,
\]
which proves \eqref{eq:parent-rms}.
\end{proof}

The squared-error estimate in Corollary~\ref{cor:estimation} and
Lemma~\ref{lem:degree-moments} allow us to use Cauchy--Schwarz when
the parameter estimate and the current degrees are dependent.
No independence between these two quantities is assumed.

\subsection{The policy and its regret}

For $n\geq4$, the \emph{geometric-update policy} $A^{\mathrm{geo}}$
accepts the seed and uses Greedy through time $4$. Immediately after
processing the edge that creates vertex $k$, for
$k\in\{4,8,16,\ldots\}$ with $k<n$, it computes
$\widehat\theta_k$ from $L_k$ with bisection error at most $1/n$.
It then recomputes the threshold schedule for this estimate with
the original terminal horizon $n$. The schedule is used from
vertex $k+1$ until the next update. Accepted edges are retained
throughout. There is no phase in which all edges are discarded.

For $t\geq4$, let
$\kappa(t)=2^{\lfloor\log_2t\rfloor}$ and define
\begin{equation}\label{eq:epoch-error}
 \epsilon_{k,n}=
 \min\left\{1,\frac{4\sqrt{k-2}}{k-3}+\frac1n\right\},
 \qquad k\geq4.
\end{equation}
Thus $\kappa(t)$ is the most recent update time.

\begin{theorem}[Learning with geometric updates]\label{thm:epoch}
For every $n\geq4$, uniformly over the unknown constant parameter $\theta\in[0,1]$,
\begin{align}
 \OPT(P_\theta)-\E_\theta|M_n^{A^{\mathrm{geo}}}|
 &\leq3+\sum_{t=4}^{n-1}
    \epsilon_{\kappa(t),n}K_{t+1,n}
          (4+\sqrt3\,H_{t-1})\label{eq:epoch-finite}\\
 &=O(\sqrt n\,\log^2 n).\label{eq:epoch-rate}
\end{align}
The algorithm uses $O(n^2\log n)$ arithmetic operations in total and
$O(n)$ stored entries.
\end{theorem}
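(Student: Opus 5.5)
The plan is to telescope the fixed true-model value function $V_t^\theta$ of Proposition~\ref{prop:bellman} along the trajectory of $A^{\mathrm{geo}}$, as announced at the start of this section. Let $R_t$ be the number of edges accepted by $A^{\mathrm{geo}}$ through time $t$, and set $\Phi_t=R_t+V_t^\theta(T_t,U_t)$. Then $\E_\theta\Phi_n=\E_\theta|M_n^{A^{\mathrm{geo}}}|$. Applying Theorem~\ref{thm:robust} with $Q=P_\theta$ gives $\err_n=0$, so $\OPT(P_\theta)=V_{\mathrm{init}}^\theta$. Under the true law, the Bellman identity $\E[G_t\mid\hist_t]=V_t^\theta$ makes the one-step drift $\E_\theta[\Phi_{t+1}-\Phi_t\mid\hist_t,U_t]$ equal to minus the expected action loss at time $t$. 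That loss is the gap between $G_t(v)$ and the immediate reward plus $V_{t+1}^\theta$ of the action actually taken. The regret is therefore exactly the sum of expected per-step action losses. No forecast value function is telescoped, so the warning after Corollary~\ref{cor:suffix} about switching forecasts does not apply.

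For the first few steps (the seed decision and the Greedy arrivals through time $4$), I bound each action loss crudely by $1$. By Lemma~\ref{lem:span}, $0\le h_t\le1$, so any action loses at most one unit. There are at most three such decisions (times $2,3$, and the arrival of vertex $4$), which gives the constant $3$. For $t\ge4$, the arrival of vertex $t+1$ is handled by the $\lambda$-threshold schedule with $\lambda=\widehat\theta_{\kappa(t)}$. This schedule was computed with terminal horizon $n$, so its coefficients are exactly $b^\lambda_{t+1}$, and the update made right after vertex $k$ governs vertices $k+1,\dots,2k$. The loss is zero if the parent is matched. Otherwise Lemma~\ref{lem:sensitivity}, \eqref{eq:local-action-loss}, bounds the conditional loss by $|\theta-\lambda|(D_t+4)K_{t+1,n}$, where $D_t$ is the parent degree. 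Here $\lambda$ is $\hist_{\kappa(t)}$-measurable, hence $\hist_t$-measurable, so conditioning on $\hist_t$ gives the bound $|\theta-\widehat\theta_{\kappa(t)}|(\mu_t+4)K_{t+1,n}$.

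It remains to take expectations of this dependent product, and this is the step I expect to be the main obstacle. The estimate and $\mu_t$ both depend on the same tree. I would use Minkowski and Cauchy--Schwarz:
\[
\E_\theta\bigl[|\theta-\widehat\theta_{\kappa(t)}|(\mu_t+4)\bigr]\le\bigl(\E_\theta|\theta-\widehat\theta_{\kappa(t)}|^2\bigr)^{1/2}\Bigl(4+\bigl(\E_\theta\mu_t^2\bigr)^{1/2}\Bigr).
\]
Corollary~\ref{cor:estimation} with $\eta=1/n$ bounds the first factor by $\epsilon_{\kappa(t),n}$, and \eqref{eq:parent-rms} bounds the second by $4+\sqrt3H_{t-1}$. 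Summing over $t$ gives \eqref{eq:epoch-finite}. One subtlety needs checking: the estimator bound must hold for $\widehat\theta_k$ computed at a time fixed in advance, even though the matching decisions are adaptive. This is fine because the graph law is exogenous and $L_k$ does not depend on the decisions.

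For the rate \eqref{eq:epoch-rate}, I use $\kappa(t)>t/2$, so $\epsilon_{\kappa(t),n}=O(t^{-1/2})$. By \eqref{eq:factor-estimate}, $K_{t+1,n}\le\sqrt{(n-2)/(t-1)}$. The summand is therefore $O(\sqrt n\,t^{-1}\log t)$, and the sum over $t<n$ is $O(\sqrt n\log^2n)$. For complexity, there are $O(\log n)$ updates. Each update costs $O(k\log n)$ for bisection plus $O(n^2)$ for recomputing the threshold schedule, for a total of $O(n^2\log n)$. Storage is $O(n)$, since only two coefficient rows and the current thresholds are kept.
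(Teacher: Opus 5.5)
Your proposal is correct and follows the paper's proof essentially step by step. It uses the fixed true-model potential $V_t^\theta$ to turn the regret into an exact sum of nonnegative action gaps, the crude bound of three for the seed and the two Greedy arrivals, Lemma~\ref{lem:sensitivity} conditioned on $\hist_t$ with $\widehat\theta_{\kappa(t)}$ already measurable, and Cauchy--Schwarz combined with Corollary~\ref{cor:estimation} and Lemma~\ref{lem:degree-moments}, followed by the same rate and complexity accounting. The only cosmetic differences are that you get $\OPT(P_\theta)=V^\theta_{\mathrm{init}}$ from Theorem~\ref{thm:robust} with zero model error rather than directly from the Bellman equation, and that the seed gap is really bounded by $b_2^\theta(1)\in[0,1]$ rather than by Lemma~\ref{lem:span}; both are equally immediate.
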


\begin{proof}
Use the true-model optimal value function $V_t^\theta$ as a fixed
potential throughout the execution. Conditional on the selected
parent, let $g_t$ be the difference between the maximum of immediate
reward plus true continuation and the value of the selected action.
It is zero when the parent is matched. Otherwise
\[
 g_t=\max\{0,\Delta_t(D_t;\theta)\}
           -\ind_{\{\text{accept}\}}\Delta_t(D_t;\theta).
\]
The true Bellman equation and telescoping give the exact identity
\[
 \OPT(P_\theta)-\E_\theta|M_n^{A^{\mathrm{geo}}}|
   =g_{\mathrm{seed}}+\sum_{t=2}^{n-1}\E_\theta g_t.
\]
Here $g_{\mathrm{seed}}=\max\{1,2b_2^\theta(1)\}-1$.
All these gaps are nonnegative and at most one. The seed and the
arrivals at times $3$ and $4$ contribute at most three.

For $t\geq4$, put $k=\kappa(t)$ and
$\delta_k=|\widehat\theta_k-\theta|$.
The estimate is measurable before the next parent is drawn.
Lemma~\ref{lem:sensitivity}, conditional expectation, and
Cauchy--Schwarz yield
\begin{align*}
 \E_\theta g_t
 &\leq K_{t+1,n}\E_\theta[\delta_k(\mu_t+4)]\\
 &\leq K_{t+1,n}
       \bigl(\E_\theta\delta_k^2\bigr)^{1/2}
       \left((\E_\theta\mu_t^2)^{1/2}+4\right)\\
 &\leq K_{t+1,n}\epsilon_{k,n}(4+\sqrt3\,H_{t-1}).
\end{align*}
The last line uses Corollary~\ref{cor:estimation} and
Lemma~\ref{lem:degree-moments}. Exogenous growth ensures that the
estimator has the stated distribution under this policy. Summing
proves \eqref{eq:epoch-finite}.

Since $t/2<\kappa(t)\leq t$ and $\kappa(t)\geq4$,
\[
 \epsilon_{\kappa(t),n}\leq\frac{16}{\sqrt t}+\frac1n,
 \qquad K_{t+1,n}\leq2\sqrt{\frac nt}.
\]
Using $H_{t-1}\leq1+\log t$, the sum in
\eqref{eq:epoch-finite} is
\[
 O\left(\sqrt n\sum_{t=4}^{n-1}\frac{1+\log t}{t}
       +\frac1{\sqrt n}\sum_{t=4}^{n-1}
                                  \frac{1+\log t}{\sqrt t}\right)
 =O(\sqrt n\,\log^2 n).
\]
There are $O(\log n)$ updates. Each threshold computation costs
$O(n^2)$ arithmetic operations and uses two coefficient rows and
one threshold schedule. Estimation costs $O(k\log n)$ at time $k$,
and the sum of update times is $O(n)$. Degrees, matched flags,
the current coefficient rows, and thresholds require $O(n)$
stored entries. Between updates each arrival takes $O(1)$ work.
\end{proof}

Define the minimax oracle regret over policies that know $n$ but
do not know $\theta$ by
\[
 \mathfrak R_n=\inf_A\sup_{\theta\in[0,1]}
       \left\{\OPT(P_\theta)-\E_\theta|M_n^A|\right\}.
\]
Theorem~\ref{thm:epoch} implies
\[
 \mathfrak R_n=O(\sqrt n\,\log^2 n)=o(n^{2/3}).
\]
Thus $n^{2/3}$ is not the minimax optimal learning rate in this
constant-parameter model. The exact minimax rate, and whether the
logarithmic factors can be removed, remain open. This conclusion
does not assert that the pilot-and-commit policy itself incurs
$\Theta(n^{2/3})$ regret, or that the geometric-update policy
outperforms it at every finite horizon.

\FloatBarrier
\section{Reproducible computations}\label{sec:computations}

All reported values are obtained by deterministic recursions. No Monte Carlo estimates are used. Exact small-instance checks use rational arithmetic; larger computations use double precision.

An independent Bellman recursion branches on parent degree and acceptance or rejection, without assuming separability. For $n\in\{4,6,8,10\}$ and $\theta\in\{0,1/2,1\}$, it agrees with Proposition~\ref{prop:bellman} in all 1515 checked nonterminal states after symmetry reduction. Three additional mismatched-parameter cases at $n=10$ agree with direct policy evaluation. A labelled-history recursion verifies Theorem~\ref{thm:robust} on six non-affine laws at $n=7$ and both four-vertex lower examples. For $\varepsilon=1/12$ in Proposition~\ref{prop:sharp}, the values are $17/12$ and $5/4$, with error $1/12$ and regret $1/6$.

For the estimator, an independent recursion over degree multisets reproduces the scalar leaf-count distribution exactly in twelve cases with $k\in\{4,6,8,10\}$ and the same three parameters. We also check the derivative inequalities at 45 rational parameter-size pairs. Figure~\ref{fig:numerics} uses a $21\times21$ true/forecast parameter grid at $n=1000$, together with 25 deterministically computed leaf-count distributions. These finite computations test the implementation and illustrate the estimates; the theorems are established by the preceding proofs.

For the geometric-update policy, 5940 exact price comparisons satisfy Lemma~\ref{lem:sensitivity}. An independent degree-multiset recursion checks both degree-moment bounds in 75 parameter-size cases, and exact leaf laws check the squared-error estimator bound in 20 cases. Full policy-state evaluation verifies the performance-difference identity for $n\in\{4,6,8,10,12\}$ and $\theta\in\{0,1/2,1\}$, using 2529 cached states across the 15 instances. The streaming implementation is also checked against rational decisions on all 5040 labelled histories at $n=8$ (30240 arrival decisions), and on six longer deterministic histories exercising updates at $8$ and $16$. These are finite correctness checks; they do not establish the asymptotic rate or finite-horizon dominance over other policies.

\begin{table}[htbp]
\caption{Expected numbers of matched edges for $n=1000$, rounded to three decimals. The last column uses the same forecast $\widehat\theta=1$ for every true parameter. These are finite-horizon values under the seed-edge convention.}\label{tab:values}
\centering
\begin{tabular}{rrrr}
\toprule
True $\theta$ & Oracle optimum & Greedy & Forecast $\widehat\theta=1$\\
\midrule
0 & 333.333 & 333.333 & 326.664\\
0.25 & 319.190 & 318.208 & 316.527\\
0.50 & 303.415 & 300.067 & 302.627\\
0.75 & 283.675 & 277.911 & 283.557\\
1 & 257.523 & 250.250 & 257.523\\
\bottomrule
\end{tabular}
\end{table}

\begin{figure}[htbp]
\centering
\includegraphics[width=\linewidth]{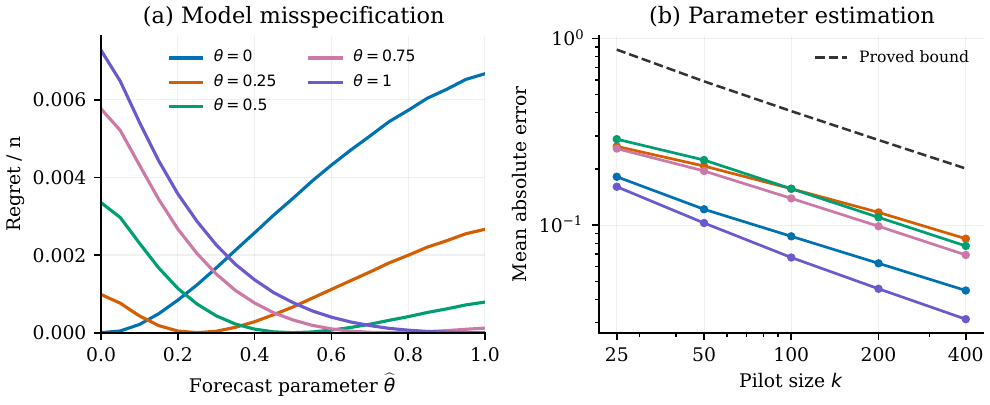}
\caption{Left: plug-in regret per vertex at $n=1000$ as the forecast parameter varies. Right: mean absolute error of the clipped inverse leaf-count estimator, evaluated with bisection tolerance $10^{-8}$; the dashed curve is Corollary~\ref{cor:estimation}. Each colour denotes a fixed true parameter. The estimator curves evaluate the full scalar leaf-count law, rather than sampled trees.}\label{fig:numerics}
\end{figure}

Table~\ref{tab:values} shows why correct-model thresholds and misspecification must be assessed separately. The general parameter bound can substantially exceed the observed loss inside this particular family. No asymptotic rate in $|\theta-\widehat\theta|$ is inferred from these numerical values.

\section{Discussion}

The main structural estimate is a unit-span conditional continuation score for affine attachment forecasts. It gives regret at most twice cumulative conditional model error under arbitrary exogenous misspecification. Within the uniform--preferential family, sensitivity of the individual Bellman prices also supports repeated estimation from a single growing tree.

Geometric updating yields $O(\sqrt n\,\log^2 n)$ regret with $O(n^2\log n)$ arithmetic work and $O(n)$ storage. The exact optimal rate remains unresolved: no matching lower bound of order $\sqrt n$ is proved here, and the logarithmic factors may be artefacts of the uniform sensitivity and degree-moment bounds. The asymptotic comparison also does not imply finite-horizon dominance of one policy over the other.

A finer analysis could exploit how often the process visits states where the decision margin $\Delta_t(d;\theta)$ is close to zero. This may improve both parameter calibration and learning guarantees. The four-vertex lower constructions address general history-dependent misspecification and do not supply a lower bound in the constant-parameter family. Finally, comparison with offline maximum matchings requires additional competitive analysis beyond oracle regret.

\section*{Code and data availability}
The Python implementations, exact verification scripts, deterministic result files, and figure-generation code are available at
\url{https://github.com/mgalazka84/robust-online-matching}.
The version accompanying this manuscript is commit
\href{https://github.com/mgalazka84/robust-online-matching/commit/fbabda8f99986032a8870af844d4ca51f57722ce}{\texttt{fbabda8f9998}}.
The repository includes a documented command for rerunning the checks and regenerating the figure. All reported computations are deterministic; no Monte Carlo samples are used.

\section*{Declaration of generative AI use}
Generative AI tools assisted with mathematical exploration, code development, and drafting.

\begingroup
\interlinepenalty=10000
\bibliography{references}
\endgroup
\end{document}